\documentclass[a4paper,UKenglish,cleveref, autoref, thm-restate]{lipics-v2021}

\bibliographystyle{plainurl}

\title{Continuous Map Matching to Paths\\ under Travel Time Constraints}

\titlerunning{Continuous Map Matching to Paths}

\author{Yannick Bosch}{University of Konstanz, Germany}{yannick.bosch@uni-konstanz.de}{https://orcid.org/0009-0005-8517-9226}{}

\author{Sabine Storandt}{University of Konstanz, Germany}{sabine.storandt@uni-konstanz.de}{https://orcid.org/0000-0001-5411-3834}{}

\authorrunning{Y. Bosch and S. Storandt} 

\ccsdesc[500]{Theory of computation~Design and analysis of algorithms}
\ccsdesc[500]{Theory of computation~Computational geometry} 

\keywords{Map matching, Travel time, Segment-circle intersection data structure} 

\category{} 

\relatedversion{} 

\nolinenumbers

\begin{document}

\maketitle

\begin{abstract}
 In this paper, we study the problem of map matching with travel time constraints. Given a sequence of $k$  spatio-temporal measurements and an embedded path graph with travel time costs, the goal is to snap each measurement to a close-by location in the graph, such that  consecutive locations can be reached  from one another along the path  within the timestamp difference of the respective measurements.
This problem arises in public transit data processing as well as in map matching of movement trajectories to general graphs.
We show that the classical approach for this problem, which relies on selecting a finite set of candidate locations in the graph for each measurement, cannot guarantee to find a consistent solution. We propose a new algorithm that can deal with an infinite set of candidate locations per measurement. We prove that our algorithm  always detects a consistent map matching path (if one exists). Despite the enlarged candidate set, we also demonstrate that our algorithm has superior running time in theory and practice.
For a path graph with $n$ nodes, we show that our algorithm runs in  $\mathcal{O}(k^2 n \log {nk})$ and under mild assumptions in $\mathcal{O}(k n ^\lambda + n  \log^3 n)$ for $\lambda \approx 0.695$. This is a significant improvement over the baseline, which runs in  $\mathcal{O}(k n^2)$ and which might not even identify a correct solution. The performance of our algorithm hinges on an efficient segment-circle intersection data structure. We describe how to design and implement such a data structure for our application. In the experimental evaluation, we demonstrate the usefulness of our novel algorithm on a diverse set of generated measurements  as well as GTFS data.
\end{abstract}
\section{Introduction}
Map matching is the process of aligning a sequence of uncertain position measurements  to a path in a given network, such that movement along the paths explains the observed measurements well. Oftentimes, the measurements also contain temporal information, which should be incorporated in the mapping process to obtain a meaningful match. There are many applications of map matching, including real-time navigation \cite{bernstein1996introduction,white2000some}, traffic analysis \cite{giovannini2011novel,goh2012online}, location-based services \cite{velaga2010development}, behavior analysis of animals and humans \cite{cho2017basis,edelhoff2016path}, as well as indexing and compression of large trajectory data  sets \cite{kellaris2013map,funke2019pathfinder}.

Formally, the input is defined as a sequence of $k$ measurements  $M = M_1, \dots, M_k$, where $M_i = (p_i, t_i)$ consists of the measured position $p_i=(x_i,y_i) \in \mathbb{R}^2$ and the timestamp $t_i \in \mathbb{R}$ of the measurement. Further given is an embedded directed graph $G(V,E)$ with node coordinates $(v_x,v_y) \in \mathbb{R}^2$ for each node $v \in V$. Edges are embedded as straight line segments and are augmented with travel time costs $t : E \rightarrow \mathbb{R}^+$. Edge costs are interpolated linearly between the end points of the edge.
We say a location $l=(x,y)$ \emph{exists} in $G$ if there is an edge $e = \{v,w\}\in E$ such that  $l = \lambda v + (1-\lambda) w$ for some $\lambda \in [0,1]$. That means, $l$ is a point on the straight line segment that represents $e$. The goal of map matching is to assign each measurement $M_i$ to a location $l_i$ that exists in $G$, such that $l_i$ is in close vicinity to $p_i$ and $l_{i+1}$ can be reached from $l_i$ on a path in $G$ whose travel time does not exceed the temporal difference between consecutive measurements, that is  $t_{i+1}-t_i$. The concatenation of such paths for $i=1, \dots, k-1$ forms the overall \emph{matching path}.

In this paper, we focus on the scenario where $G$ is a path graph. 
Map matching to paths arises as a subproblem in map matching to general graphs in approaches that first extract a set of candidate paths and then check their feasibility as a matching path. This kind of approach was used, for example, in \cite{custers2022physically} where the goal is to find a physically consistent matching path (which obeys maximum road speeds as well as constraints on acceleration and deceleration). Furthermore, the problem also plays a role in  the processing of public transit data. In the so called bus stop mapping problem, the path graph is defined by a bus line and the measurement sequence consists of bus stop positions together with bus departure times. Oftentimes, the bus stop positions are not directly on the bus line or there are multiple close-by bus stops and thus they first need to be matched correctly to allow for public transit routing and other applications \cite{geops_snapping_stops}. Figure \ref{fig:Olbia} shows an example instance.
\begin{figure}
    \centering
    \includegraphics[width=\textwidth]{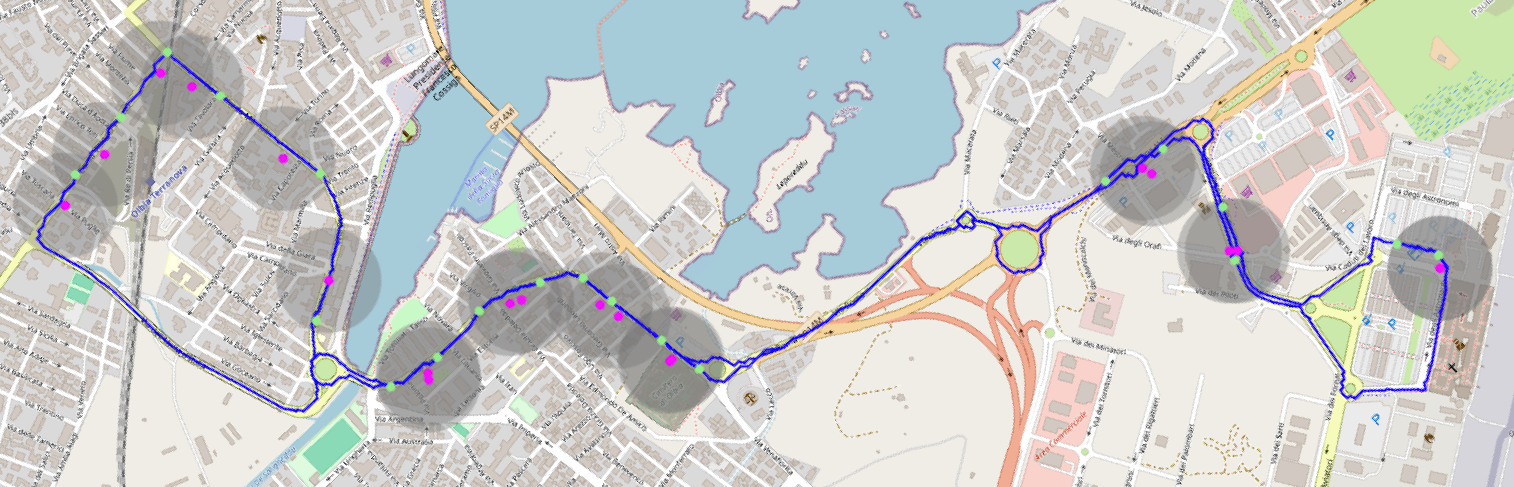}
    \caption{Bus stop mapping for a circular bus route in Olbia, Italy. The radius is chosen as 110 meters for visual clarity. Marked in   light green is the mapped sequence.}
    \label{fig:Olbia}
\end{figure}

\subsection{Issues with Existing Methods}
A naive attempt for map matching is to assign  each measurement $M_i$ to the closest node in $G$ or to the closest point on an edge in $G$ \cite{white2000some}. However, this method does not produce sensible matching paths reliably as it neglects the necessary continuity of the movement in the graph \cite{krumm2007map}. A more sophisticated and wide-spread method for map matching is the following one \cite{lou2009map,eisner2011algorithms,hu2016if}, which we refer to as DAG  algorithm:
\begin{itemize}
    \item For each $M_i$, select a finite set of candidate locations $L_i$ that exist in $G$. 
    \item Construct a layered DAG where layer $i$ contains a node for each element in $L_i$ and where there are edges $(a,b)$ for each $a \in L_i, b \in L_{i+1}$. Edge costs correspond to the minimum travel time between the respective nodes in the graph.
    \item Prune edges that do not obey given travel time constraints between consecutive locations.
    \item Find the shortest path $\pi$ from a node in layer $1$ to a node in layer $k$ and return it.
\end{itemize}
The advantage of this approach is its simplicity. However, the drawback of restricting candidate locations to a finite candidate location set  in the underlying graph is that the quality of the returned path and the efficiency of the method crucially  depend on the  density and distribution of the candidate locations.
Most commonly, $L_i$ is either the set of the $c$-closest nodes to $p_i$ in $V$ for some $c \in \mathbb{N}$ or the set of nodes within a disk $D_i$ of radius $r$ centered at $p_i$ (where $r$ models the  location measurement uncertainty).

Potential issues that arise when using these methods in the DAG approach are illustrated in Figure \ref{fig:mm}.
\begin{figure}
    \centering
    \includegraphics[width=0.6\linewidth]{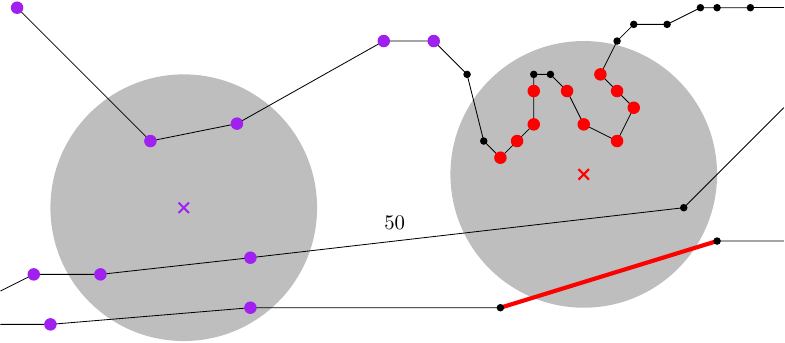}
    \caption{Potential issues for map matching when using finite candidate location sets: The $\times$ symbols mark measured positions. For the right measurement, its $c=10$ closest nodes in the graph, marked in red, are all on one road and other close-by roads are thus not considered. For the left measurement, its $c=10$ closest nodes are quite far spread and not necessarily close to the measured position. If all nodes in a disk around the measurement are considered (indicated by the gray areas), positions on segments as the one marked in  red are ignored as none of its endpoints is inside the respective disk. In the middle part of the image,  a road segment of cost 50 is shown. If the allowed travel time from the purple to the red measurement is 40, then this road section could not be part of a feasible matching path. However, there is clearly a position on the road segment within the disk around the red measurement that can be reached after 40 units of travel time (using linear interpolation). Thus, excluding this segment results in a false negative.}
    \label{fig:mm}
\end{figure}
If the candidate set is formed by the $c$ nodes that are closest to the measured location, selecting a sensible value for $c$ is challenging. If there are densely sampled paths in the graph (which often is the case for curvy routes in real networks), then even with a large value of $c$ all candidate locations might be on one road; and good alternatives, which are slightly further away, could be completely missed. Setting $c$ to a large value might result in a mapping to locations that are far away from the measured position. It also increases the number of nodes in the constructed DAG and with that the running time of the approach.
When using disks $D_i$ of radius $r$ to determine the candidate locations as $D_i \cap V$, feasible mapping locations can be  missed because $D_i$  might only intersect an edge but contains none of its endpoints. Furthermore, the running time for computing the DAG might also become huge if many nodes are contained in the disks, as shortest paths have to be computed between all pairs of candidate locations for consecutive layers.
In general, using only a fixed set of candidate locations becomes particularly problematic  when travel time constraints are taken into account. It might be the case  that there exists a path $\pi_{i,i+1}$ in $G$ that connects a location close to $p_i$ to a location close to $p_{i+1}$ with a travel time of at most $t_{i+1}-t_i$. However, if the respective start and end locations are on edges and not on nodes, it might be that only a superpath of $\pi_{i,i+1}$ is represented in the DAG with a travel time that exceeds  $t_{i+1}-t_i$. Thus, the algorithm cannot identify a feasible matching path even though there is one.  Clearly, all of the discussed issues can already arise if $G$ is a path graph. And even for this special case, the described DAG  algorithm is the prevailing standard \cite{geops_snapping_stops}.

In this paper, we propose a continuous map matching approach that does not require to fix a  finite set of candidate locations for each measurement a priori. Instead, we allow $M_i$ to be mapped to any location in $D_i$ that exists in $G$.  This prohibits the usage of the DAG  algorithm,  since there can now be an infinite number of candidate locations in each layer. We prove that our novel algorithm always identifies a feasible matching path (if one exists) and that it does so  even faster in practice  than the DAG algorithm which might fail to produce a correct result due the its restricted candidate location set.

\subsection{Further Related Work}
Many approaches for map matching have been proposed in the literature, using different input models, objective functions, and assumptions about the underlying embedded graph.

In \cite{lou2009map,custers2022physically,hu2016if}, the DAG algorithm is used with a more refined candidate selection strategy. Here, first the $c$ closest edges to the measured position or the edges within a certain radius  are selected and then a suitable point per edge is chosen. However, the potential issues described above for finite candidate location sets also apply to these strategies. In \cite{eisner2011algorithms}, engineering methods are described to accelerate the baseline algorithm with  candidate locations being selected as nodes in disks of radius $r$. In particular, methods for fast one-to-many shortest path computations are shown to reduce the running time significantly. We will show that for path graphs, even faster solutions are possible.

A completely different approach for map matching is to interpret $M$ as a polyline by linear interpolation between the measured positions and to find the path in $G$ with minimum Fr\'echet distance to that polyline \cite{gudmundsson2024map,chambers2020map}. With this objective, there is no need to specify any restricted candidate location sets and one gets a continuous mapping sequence. However, the linear interpolation between position measurements might not be realistic, especially if measurements are sparse. Moreover, the computation of the exact Fr\'echet distance is computationally expensive. Thus, for fast query times, one has to resort to approximations \cite{chen2011approximate}. Finally, it is unclear how to incorporate travel time constraints into this framework.  Finally, as the Fr\'echet distance is a bottleneck measure,  there is usually a large set of matching paths with the same objective function value. Thus, oftentimes there needs to be a second map matching step with another objective function or a combined objective to select the final matching path \cite{wei2013map}. 
Other polyline distance measures, as the weak Fr\'echet distance or the vertex-monotone Fr\'echet distance \cite{brakatsoulas2005map,wei2013map} as well as dynamic time warping \cite{wakuda2012adaptive}, have also been applied in the map matching context. But they all suffer from similar issues and, in particular, the respective methods are not designed to identify a feasible matching path with respect to time constraints.

There is also a plethora of HMM or machine learning based methods for map matching \cite{krumm2007map,mohamed2016accurate,hashemi2016machine,huang2021survey,zhao2019deepmm}. However, those also do not come with quality guarantees and they either require a fine tuned model or high quality training data to work well in practice.

In \cite{bast2018sparse,vuurstaek2020gtfs}, map matching of bus stop sequences to a transport network are discussed. They also leverage the DAG algorithm but incorporate additional requirements as e.g turn restrictions or consistency among different bus lines. However, travel time constraints are also not taken into account.

\section{Contribution}
We study the following map matching problem: Given a spatio-temporal measurement sequence $M$ of size $k$, an embedded directed path $P$ with travel time costs $t$, as well as $r \in \mathbb{R}_{\geq 0}$, assign to each measurement $M_i=(p_i,t_i)$ a location $l_i$ that exists in $P$ such that
\begin{itemize}
    \item $\forall i \in [k]: d(p_i,l_i) \leq r$, that is, $l_i$ lies in the disk $D_i$ of radius $r$ centered at  $p_i$,
    \item $\forall i=2,\dots,k: t(l_{i-1},l_i) \leq t_i-t_{i-1}$, that is, the travel time in $P$ between consecutive locations does not exceed the time difference of the respective measurements.
\end{itemize}
 We call a mapping sequence $l_1, \dots,l_k$ feasible if it fulfills these constraints.

This basic formulation is of high practical relevance. Nevertheless, we are not aware of an algorithm that is guaranteed to find a feasible sequence of locations if one exists. While DAG based approaches often identify good solutions in practice, they might fail due to the restricted candidate location sets. We propose a novel COntinuous Map Matching Algorithm (COMMA),  which does not rely on  finite  candidate location sets but instead allows $M_i$ to be mapped to any location in $D_i \cap P$. Consequently,  the DAG based approach can no longer be applied as every layer in the DAG might now contain an infinite number of nodes.  Our main insight is that we can compute a feasible matching path based on detecting its intersections with the boundaries of the disks $D_1, \dots, D_k$.
Using this idea, we can fully avoid the construction of a DAG which significantly reduces the running time.
We show that COMMA can find a feasible mapping sequence (if one exists)  in $\mathcal{O}(k n \log n)$ for non-overlapping disks and in $\mathcal{O}(k^2n \log {nk})$ for general inputs. For $k \ll n$, this is a significant reduction compared to the $\mathcal{O}(k n^2)$ running time of the DAG based approach  (which might not find a feasible solution). Under realistic assumptions about the input, the running time of  COMMA can be further reduced to $\mathcal{O}(k n^{0.695} + n \log^3n)$ or $\mathcal{O}(k \sqrt n \log^2n + n\sqrt n \log^2n)$ with the help of suitable segment-circle intersection reporting data structures. We describe how to design such a data structure for general segment sets and also how to tailor it to our special case in which the segments are known to form a path.

 We demonstrate the efficiency of the overall algorithm on generated measurement sequences as well as real bus stop mapping instances extracted from GTFS\footnote{\url{https://gtfs.org/}} data.

\section{COMMA for Path Graphs}
In this section, we describe our new map matching algorithm that computes a feasible mapping location $l_i$ for each measurement $M_i$ without constructing a DAG. Throughout the paper, we use the terms path edges and segments interchangeably. We refer to a subpath of $P$ from a point $p \in P$ to $p' \in P$ as an interval of $P$ defined by the travel time from the start of $P$ to $p$ and $p'$, respectively.  
 Our algorithm proceeds in two steps. It first identifies for each $M_i$ a set of feasible intervals in which the location $l_i$ can reside. In the  second step it extracts a feasible mapping sequence for $M$ as a whole (if one exists). 
For ease of exposition, we first consider the case that the disks $D_i$ of radius $r$ are pairwise non-overlapping and later discuss which modifications are needed in case this assumption is not met.

\subsection{Algorithm} The  following algorithm identifies for each $M_i$ its possible mapping locations that are part of some feasible matching path. The idea is to start with all intervals in $D_i \cap P$ and to then  exclude points that do not obey the travel time constraints with respect to $M_{i-1}$ or $M_{i+1}$.
\begin{enumerate}
    \item For each $i \in [k]$, compute the  intervals $I_i$ of $P$ inside the disk $D_i$ and sort them increasingly.
    \item Forward sweep: For $i=2,\dots,k$, consider the intervals in $I_i$. For each interval $[a,b]$, find the latest endpoint $b'$ of an interval in $I_{i-1}$ which is prior to $a$. If no such $b'$ exists, delete the interval. Set $b = \min(b,b'+t_i-t_{i-1})$. If $[a,b]=\emptyset$, delete the interval.
    \item Backward sweep:  For $i=k-1,\dots,1$, consider the intervals in $I_i$. For each interval $[a,b]$, find the earliest starting point $a'$ of an interval in $I_{i+1}$ which starts after $b$. If no such $a'$ exists, delete the interval. Set $a = \max(a,a'-t_{i+1}+t_{i})$. If $[a,b]=\emptyset$, delete the interval.
\end{enumerate}
The three steps are illustrated  in Figure \ref{fig:sweep}. 
The correctness  of the algorithm is shown below.
\begin{figure}
    \centering
    \includegraphics[width=\linewidth]{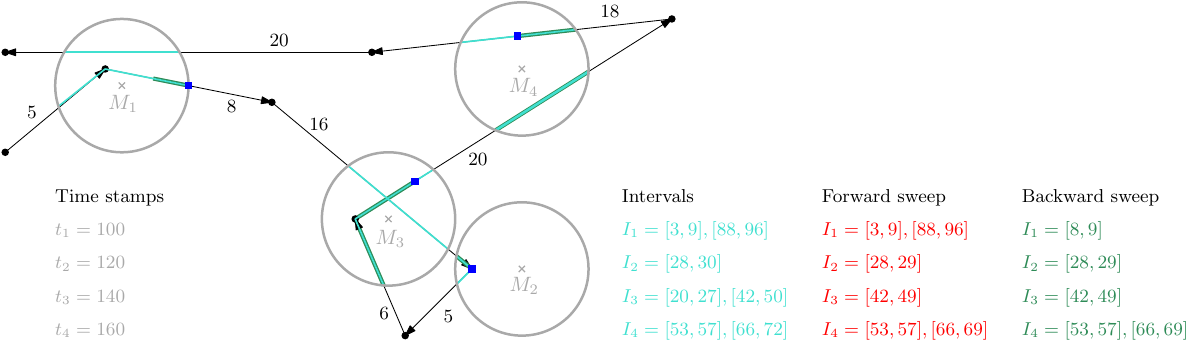}
    \caption{Example execution of COMMA. The underlying path $P$ along with its travel time costs is shown in black. A sequence of four measurements is indicated by the measured positions ($\times$) and the listed time stamps (in gray). The initial intervals that result from intersecting the disks around the measured positions with $P$ are provided in turquoise. The modified intervals after both sweeps are shown in green. The blue squares mark a feasible location mapping  sequence.}
    \label{fig:sweep}
\end{figure}
\begin{lemma}\label{lem:sweep}
    Let $I_i, i \in [k]$ denote the  final intervals after the two sweeps, then for each interval $[a,b] \in I_i, i=1, \dots, k-1$ and for each location $l \in [a,b]$, there exists  $[a',b'] \in I_{i+1}$ and $l' \in [a',b']$ such that the travel time from $l$ to $l'$ on $P$ is at most $t_{i+1}-t_i$, and vice versa.
\end{lemma}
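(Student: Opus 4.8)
The plan is to work in the natural parametrization of $P$ in which every point is identified with the travel time from the start of $P$ to it, so that an interval $[a,b]$ is literally a real interval and, since $P$ is directed, the travel time from a point $x$ to a point $y$ equals $y-x$ when $y\ge x$ and is infinite otherwise. In this language the claim becomes: every $l$ in a final interval of $I_i$ admits an $l'$ in a final interval of $I_{i+1}$ with $l\le l'\le l+(t_{i+1}-t_i)$, and symmetrically every $l'$ in a final interval of $I_{i+1}$ admits such a predecessor $l$ in a final interval of $I_i$. First I would record the trivial but load-bearing monotonicity fact: the two sweeps only shrink intervals or delete them (never enlarge, move, or split), so a final interval of $I_j$ is contained in the interval of $I_j$ it descends from after step~1; in particular left endpoints only increase and right endpoints only decrease over the run.

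For the forward half I would argue directly from the backward sweep. Since step~3 processes the indices in decreasing order, at the moment it treats $I_i$ the set $I_{i+1}$ is already in its final form and is untouched afterwards, and the same holds for $I_i$ once step~3 has passed it. So it suffices to inspect a single interval $[a,b]$ that survives step~3 at index $i$: its left endpoint was set to $\max(a,\,a'-(t_{i+1}-t_i))$ for the start $a'$ of some final interval of $I_{i+1}$ lying to the right of $b$, hence every $l\in[a,b]$ satisfies $l\le b<a'$ and $a'-l\le t_{i+1}-t_i$; taking $l':=a'$ gives the required successor. Intervals that step~3 deletes at index $i$ are exactly those with no such successor, which is consistent.

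The ``vice versa'' direction is where I expect the real work. Fix an $l'$ surviving in a final interval of $I_{i+1}$ and trace why the forward sweep did not delete it: when step~2 processed that interval it replaced its right endpoint by $\min(\cdot,\,w+(t_{i+1}-t_i))$, where $w$ is the latest endpoint of a then-current interval of $I_i$ lying before the interval's left endpoint; survival of $l'$ forces $w\le l'\le w+(t_{i+1}-t_i)$, and $w$ belongs to $I_i$ after step~2. The point to nail down is that this witness $w$ is still present after step~3 has pruned $I_i$. When step~3 treats the interval of $I_i$ carrying $w$, it clips its left endpoint using the earliest start $a^\ast$ of a final interval of $I_{i+1}$ lying to the right of that interval's right endpoint; by monotonicity the final interval of $I_{i+1}$ containing $l'$ sits at or to the right of $w$ (its left endpoint only moved right from a position already at or beyond the original left endpoint past which $w$ was chosen), so in the generic case $a^\ast\le l'\le w+(t_{i+1}-t_i)$, whence $a^\ast-(t_{i+1}-t_i)\le w$ and the clip does not remove $w$. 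Thus $w$ lies in a final interval of $I_i$ and is the desired predecessor of $l'$.

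The main obstacle is precisely this coordination between the two sweeps: the predecessor certificate for a surviving point of $I_{i+1}$ is manufactured by the forward sweep but lives in $I_i$, which the backward sweep subsequently thins out, so one must verify that exactly this certificate is never discarded. A secondary nuisance is that, even when the disks $D_i$ are pairwise disjoint, a winding path may enter each $D_i$ several times, so the intervals of $I_i$ and $I_{i+1}$ can interleave along the parametrization; this is why the algorithm phrases its tests as ``latest endpoint before $a$'' and ``earliest start after $b$'' rather than via plain set intersection, and it forces the witness-survival argument to keep track of which interval of $I_{i+1}$ vouches for a given point and to resolve the boundary cases (equality in those comparisons, or a witness coinciding with an interval endpoint) by fixing a consistent open/closed convention.
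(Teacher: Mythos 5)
Your argument is correct and takes essentially the same route as the paper's proof: the backward sweep certifies the direction from $I_i$ to $I_{i+1}$ (you argue it directly via $l' := a'$, the paper by contradiction, but the content is the same), and the forward sweep certifies the reverse direction. The one genuine difference is your third paragraph: the paper dispatches the ``vice versa'' direction with a single sentence (``the same argumentation applies \dots by means of the forward sweep''), which silently ignores that the forward sweep runs first, so its witness $w$ -- a right endpoint of a then-current interval of $I_i$ -- could in principle be clipped or deleted when the backward sweep later prunes $I_i$. Your witness-survival check closes exactly this gap: the final interval of $I_{i+1}$ containing $l'$ starts after $w$, so the interval carrying $w$ is not deleted, and since $a^{\ast} \le l' \le w + (t_{i+1}-t_i)$ the raised left endpoint $\max\bigl(\cdot,\, a^{\ast}-(t_{i+1}-t_i)\bigr)$ stays at or below $w$; this is a genuine refinement over the paper's terse treatment. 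The boundary-case caveat you flag is harmless in the regime of this lemma: under the non-overlapping-disks assumption, points of $I_i$ and $I_{i+1}$ are distinct, so the interval containing $l'$ indeed starts strictly after $w$ and the strict/non-strict comparisons cause no trouble.
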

\begin{proof}
    Assume for contradiction that there exists  $l \in [a,b]$ but no feasible $l'$. This implies that all intervals in $[a',b'] \in I_{i+1}$ that start after $l$ have $a' > l+t_{i+1}-t_i$. This applies in particular to the interval with smallest starting point greater than $b$. Thus, as the interval $[a,b]$ was considered in the backward sweep, the value of $a$ was set to $a' - t_{i+1} + t_i > l$. This contradicts the assumption that $l \in [a,b]$ as we have shown $a > l$. The same argumentation applies if the roles of $l'$ and $l$ are reversed by means of the forward sweep.
\end{proof}
If any $I_i$ becomes empty during the course of the algorithm, then the process is immediately aborted and it is certified that there is no sequence of mapping locations that result in a feasible matching path. Otherwise, we can retrieve a feasible location sequence  as follows:
\begin{itemize}
    \item Pick any location in $\biguplus_{I \in I_1}I$ as $l_1$.
    \item For $i=2,\dots,k$ pick $l_i \in \biguplus_{I \in I_i}I \cap [l_{i-1},l_{i-1}+t_i-t_{i-1}]$. Such a location always exists as shown in  Lemma \ref{lem:sweep} and by construction the travel time constraint is obeyed.
\end{itemize}
We remark that the backward sweep alone would suffice for this algorithm to succeed. However, combined with the forward sweep, the final intervals consist solely of locations that are part of some feasible mapping sequence. This is beneficial if secondary optimization goals, as for example, the summed distance between measured positions and mapped location shall be considered to select one of the feasible solutions. Figure \ref{fig:sweep} shows the location sequence that results from always picking the largest possible value in the respective intervals. The following lemma shows that this retrieval strategy is very efficient.
\begin{lemma}
    Given the set of feasible intervals $I_i, i \in [k]$, a feasible mapping sequence can be computed in $\mathcal{O}(k \log n)$.
\end{lemma}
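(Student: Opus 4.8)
The plan is to show that the retrieval procedure stated just above consists of one $\mathcal{O}(1)$ initialization and $k-1$ binary searches, each over a sorted array of length $\mathcal{O}(n)$.

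First I would bound the size of each interval set. Since $P$ has at most $n-1$ segments and every segment, being convex, meets the convex disk $D_i$ in a single (possibly empty) sub-segment, the interval set produced in step~1 of the algorithm satisfies $|I_i|\le n-1$. The two sweeps only shrink endpoints (via $b\mapsto\min(b,b'+t_i-t_{i-1})$ and $a\mapsto\max(a,a'-t_{i+1}+t_i)$) or delete intervals; they never create new intervals and never reorder them. Hence after the sweeps each $I_i$ is still a sorted list of pairwise disjoint intervals of length at most $n-1$, which I assume is kept as a sorted array $A_i$ (exactly the representation produced by step~1).

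Next I would carry out the retrieval. For $l_1$ take, say, the left endpoint of the first interval of $A_1$; this costs $\mathcal{O}(1)$ and is admissible because Lemma~\ref{lem:sweep} guarantees a feasible continuation from any point of $I_1$. For $i=2,\dots,k$, let $\Delta=t_i-t_{i-1}$; I need a point of $\biguplus_{I\in I_i}I$ inside $[l_{i-1},l_{i-1}+\Delta]$. I would binary-search $A_i$ for the leftmost interval $[a,b]$ with $b\ge l_{i-1}$ and set $l_i:=\max(a,l_{i-1})$. A short check shows $l_i\in[a,b]$ (since $a\le b$ and $l_{i-1}\le b$); moreover $l_i\le l_{i-1}+\Delta$, because by Lemma~\ref{lem:sweep} some interval $[a',b']\in I_i$ meets $[l_{i-1},l_{i-1}+\Delta]$, so $b'\ge l_{i-1}$ and therefore $a\le a'\le l_{i-1}+\Delta$ by the minimality of $[a,b]$. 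Thus $l_i$ is a valid choice, and by Lemma~\ref{lem:sweep} a feasible continuation again exists from $l_i$, so the greedy selection never gets stuck before reaching $l_k$.

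Finally I would add up the cost: one $\mathcal{O}(1)$ step for $l_1$ plus $k-1$ binary searches, each over an array of size at most $n-1$ and hence $\mathcal{O}(\log n)$ time, giving $\mathcal{O}(k\log n)$ in total. The only mildly delicate points are the $|I_i|=\mathcal{O}(n)$ bound, which hinges on convexity of segments and disks together with the fact that the sweeps only erode intervals, and the verification that the binary-searched interval actually yields an in-budget point, which is precisely where Lemma~\ref{lem:sweep} is invoked; the rest is bookkeeping.
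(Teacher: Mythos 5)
Your proof is correct and takes essentially the same approach as the paper: a greedy construction with one binary search per measurement over the sorted, pairwise-disjoint intervals, whose feasibility is guaranteed by Lemma~\ref{lem:sweep}, yielding $\mathcal{O}(k\log n)$ total time. The only difference is a mirror-image choice of greedy: you always take the earliest feasible location (leftmost interval, $l_i=\max(a,l_{i-1})$), while the paper takes the latest ($l_1$ at the end of the last interval of $I_1$ and $l_i=\min(l_{i-1}+t_i-t_{i-1},b)$), which are symmetric instances of the same argument.
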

\begin{proof}
    As the intervals in $I_i$ are pairwise non overlapping and sorted, $l_1$ can be set to the endpoint of the last interval in $I_1$. For $i = 2, \dots, k$, we find the last interval $[a, b] \in I_i$ with $a \leq l_{i-1} + t_i - t_{i-1}$ via binary search. Setting $l_i = \min(l_{i-1} + t_i - t_{i-1},b)$ yields the next location. Thus a mapping sequence can be found in $\mathcal{O}(k \log n)$.
\end{proof}

To assess the running time of the overall algorithm, we now describe how to conduct its steps in more detail. To get the intervals $I_i$ for each measurement, we traverse the path $P$ from start to end edge-by-edge, always keeping track of the total travel time on $P$ since the start, and check for intersection(s) of each edge segment with the boundary of  $D_i$. For each intersection, we compute its timestamp along $P$ and store these timestamps in sorted order. Note that we also include the start and the end of $P$ in this sorted list, in case they are contained in $D_i$. The result is a list of timestamps $s_1,s_2, \dots$ of even size, where odd entries mark interval starting points and even entries interval ending points. As each edge can have at most two intersection points with a disk boundary, we have $|I_i| \leq n$. Thus, traversing the path and computing the timestamps and intervals can be accomplished in time $\mathcal{O}(n)$ per measurement.
 In the forward sweep, we find the intervals in $I_{i-1}$ that are relevant for the right bound of the intervals in $I_i$ via binary search (and vice versa in the backward sweep) in $\mathcal{O}(\log n)$. The respective interval updates cost $\mathcal{O}(1)$ per interval. Thus, each sweep takes  $\mathcal{O}(k n \log n)$ in total. The location sequence  retrieval step can be done in $\mathcal{O}(k \log n)$. Thus, the two sweeps dominate the COMMA running time.

\subsection{Dealing with overlapping disks}
If we get rid of the assumption of non-overlapping disks, only intervals within one disk remain non-overlapping while intervals in $I_i$ and $I_{i+1}$ might now indeed share points. This impacts the interval update procedure during the sweeps. In the forward sweep, for an interval $[a,b] \in I_i$ we can no longer only consider the interval in $I_{i-1}$ with the latest endpoint prior to $a$. Instead, we still find this interval via binary search but then proceed to check the successive intervals in the sorted list for intersection with $[a,b]$ until we reach the end of the list or the first interval that starts after $b$. We then subdivide $[a,b]$ at the starting points of its overlapping intervals. For each subinterval $[a_s,b_s]$ created in this way that starts with an overlap with some interval $[a',b']$, we set $b_s$ to $\min(b_s,b'+t_i-t_{i-1})$. Figure \ref{fig:overlap} illustrates the process. Only the first subinterval might not have an overlapping portion  with an  interval from $I_{i-1}$. In this case, we can simply handle it with the procedure described above for non-overlapping intervals. If intervals in $I_i$ overlap after this process, they can be merged into one. The backward sweep works analogue.
\begin{figure}
    \centering
    \includegraphics[width=0.65\linewidth]{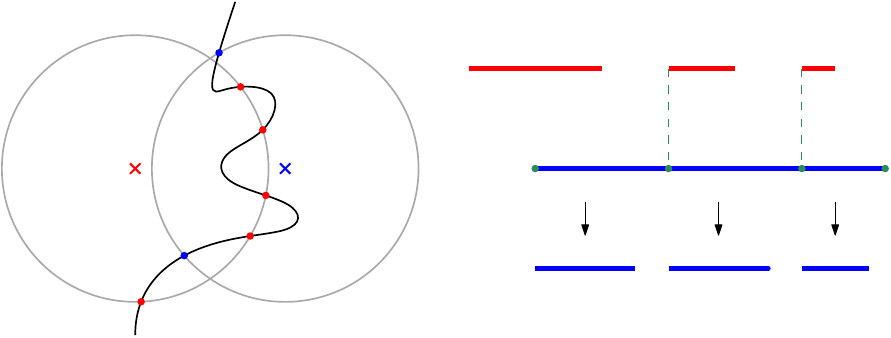}
    \caption{Example of the interval update procedure for overlapping disks. The intervals in $I_{i-1}$ are depicted in red and the interval in $I_{i}$ is depicted in blue. It is subdivided into three  intervals based on the starting points of the overlapping red intervals. Subsequently, the right borders are updated to the right borders of the respective red intervals plus the allowed time difference.}
    \label{fig:overlap}
\end{figure}

While  the  update procedure  for non-overlapping disks can only decrease the total number of intervals, the one that takes care of overlaps might also increase it. However, as the starting points of the new intervals are inherited from starting points of intervals in $I_{i-1}$,  there can be at most $kn$ interval starting points per disk during the course of the algorithm. While multiple intervals from $I_{i-1}$ might be considered for a single interval $[a,b]$ from $I_i$, the set of intervals from $I_i$ strictly contained $[a,b]$ it is only considered for $[a,b]$ and not for any other interval from $I_i$ as those are all pairwise non-overlapping. Therefore, apart from the binary searches, the number of interval access operations is bounded by $|I_{i-1}|+|I_{i}| \in \mathcal{O}(kn)$. Thus, the total running time per sweep is in $\mathcal{O}(k^2 n \log{nk})$.
\begin{lemma}\label{lem:overlap}
    COMMA runs in $\mathcal{O}(k^2  n \log{nk})$ on path graphs. 
\end{lemma}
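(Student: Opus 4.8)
The plan is to establish Lemma~\ref{lem:overlap} by combining the three components whose individual running times were already analyzed in the preceding discussion: (i) the initial interval computation, (ii) the two sweeps, and (iii) the final location-sequence retrieval. For (i), I would reiterate that traversing the path $P$ edge by edge, computing at most two segment--circle intersections per edge, and recording their timestamps in sorted order costs $\mathcal{O}(n)$ per measurement, hence $\mathcal{O}(kn)$ overall; this is dominated by the cost of a single sweep. For (iii), Lemma~\ref{lem:sweep} together with the retrieval lemma gives $\mathcal{O}(k\log n)$, which is also dominated. So the crux is to argue that each sweep in the overlapping-disk case runs in $\mathcal{O}(k^2 n \log nk)$, and that running the algorithm twice (once forward, once backward) does not change the asymptotics.

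The heart of the proof is a careful accounting argument bounding the total work in one sweep. First I would show the key structural invariant: because every newly created interval in $I_i$ inherits its starting point either from a starting point of an interval in $I_{i-1}$ (during the forward sweep) or from an original endpoint in $D_i$, and because there are at most $n$ original starting points per disk and at most $kn$ possible inherited ones across all $k$ sweep steps, the total number of interval boundaries ever present in any $I_i$ over the whole course of the algorithm is $\mathcal{O}(kn)$. Next, for the per-step work when processing one interval $[a,b]\in I_i$: we locate the relevant interval of $I_{i-1}$ by a binary search costing $\mathcal{O}(\log(nk))$ (the lists have size $\mathcal{O}(nk)$), then we scan forward through $I_{i-1}$ across all intervals overlapping $[a,b]$. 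The subtle point, which I would spell out explicitly, is that this forward scan does no duplicated work: since the intervals of $I_i$ are pairwise non-overlapping, any interval of $I_{i-1}$ strictly contained in $[a,b]$ is charged only to $[a,b]$ and to no other interval of $I_i$; and intervals of $I_{i-1}$ straddling a boundary between consecutive $I_i$-intervals are charged $\mathcal{O}(1)$ times. Hence the total number of interval-access and subdivision operations in one sweep step $i$ is $\mathcal{O}(|I_{i-1}| + |I_i|) = \mathcal{O}(kn)$, plus $\mathcal{O}(|I_i|\log(nk))$ for the binary searches. Summing over $i=2,\dots,k$ yields $\mathcal{O}(k^2 n \log nk)$ per sweep, and since there are two sweeps plus a cheaper merge step, the bound is $\mathcal{O}(k^2 n \log nk)$ overall.

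The step I expect to be the main obstacle is making the charging argument fully rigorous in the presence of both subdivision (which increases interval count) and merging (which decreases it), so that the "at most $kn$ starting points per disk" bound genuinely holds as a global bound over the entire execution rather than at a single snapshot. Concretely, one must argue that a starting point, once removed by a merge, never needs to be regenerated in a way that would blow up the count; this follows because merges only happen after a sweep step completes and a starting point removed by merging was dominated (its interval absorbed), so it cannot reappear as a distinct inherited boundary in a later step without being re-inherited from $I_{i-1}$, and the $I_{i-1}$ boundaries are themselves bounded by $kn$. I would state this as the main lemma of the accounting and then let the running-time bound follow by the summation above. The remaining details — the $\mathcal{O}(n)$ intersection computation per disk and the $\mathcal{O}(k\log n)$ retrieval — are routine and already justified in the text, so I would cite them rather than re-derive them.
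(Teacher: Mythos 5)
Your proposal is correct and follows essentially the same route as the paper: bound the number of interval starting points per disk by $\mathcal{O}(kn)$ via the inheritance argument, charge each contained $I_{i-1}$-interval to the unique (pairwise non-overlapping) $I_i$-interval containing it so that per sweep step the access cost is $\mathcal{O}(|I_{i-1}|+|I_i|)$ plus $\mathcal{O}(\log(nk))$ per binary search, and sum over the $k$ steps and two sweeps, with interval construction and retrieval dominated. Your extra care about merges only removing (never regenerating) boundaries is a sound refinement of the same accounting, not a different argument.
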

\section{Improved Running Times on Realistic Inputs}\label{sec:intersection}
The running time derived in Lemma  \ref{lem:overlap} crucially depends on the number of intervals that are created by intersecting the path segments with the disk boundaries.  In the worst-case analysis conducted above, we assume that (almost) all disks overlap  each other and that each disk boundary is intersected by $\Theta(n)$ segments. But this is a highly unlikely scenario to occur in practice and thus the respective running times might not reflect the behavior of our algorithm on real-world inputs.
For a more realistic model, we assume in this section that each disk only overlaps a constant number of other disks and that each disk boundary is intersected by at most $\sigma$ segments. While $\sigma$ is also expected to be small in real inputs, we  use it as a parameter in our analysis to make the dependency on this value transparent.

Given this model, the total number of intervals that COMMA has to deal with is bounded by $\mathcal{O}(k\sigma)$. Accordingly, the running time of the sweeps, which can be expressed as  $\mathcal{O}(\sum_{i=1}^k \sigma \log \sigma)$, reduces  to $\mathcal{O}(k \sigma \log \sigma)$. Combined with  the interval computation, which takes $\mathcal{O}(kn)$  as discussed above, we get a total running time of $\mathcal{O}(kn + k\sigma \log \sigma)$. Thus, for $\sigma \in o(n)$, the interval computation phase now dominates the overall running time. To improve on that,  we need a faster way to compute the intersections of path segments with  the disk boundaries and to infer the respective intervals. So our goal is now to construct suitable data structures for these tasks.
\subsection{A segment-circle intersection data structure for paths}
In \cite{gupta1994intersection}, it was discussed how to preprocess a given set of $n$ line segments in $\mathbb{R}^2$ such that for a query circle $C$ of radius $r$ the subset of segments that intersect $C$ can be reported efficiently.
There are two types of segment-circle intersections (both visible in Figure \ref{fig:sweep}):
\begin{itemize}
    \item[(i)] The segment intersects $C$ once. Here, one segment endpoint is inside $C$ and the other one is outside of $C$.
    \item [(ii)] The segment intersects $C$ twice. Here, both segment endpoints are outside of $C$ but the segment point closest to the center of $C$ is inside $C$.
\end{itemize}
It is suggested in \cite{gupta1994intersection} to build a separate data structure for each type. 
We review and analyze these data structures in more detail below. For the first intersection type, we  propose a significantly faster and simpler data structure for our use case in which the segments are known to form a path.  For the second intersection type, we show that the method proposed in \cite{gupta1994intersection} only works for segments that are longer than the circle diameter. We then describe how to handle the missing case without increasing the preprocessing or query time. This result applies to arbitrary segment sets and might be of independent interest.

Subsequently, we describe in more detail how to leverage the resulting data structures for improving the COMMA running time.

\subsubsection{Dealing with intersections of type (i)}
To detect segments that have a single intersection point with a given query circle $C$, a multi-step preprocessing is applied to the segment set $S$ in \cite{gupta1994intersection}. It relies on a geometric transformation in which points $p=(x,y) \in \mathbb{R}^2$ are mapped to points $p' = (x,y,x^2+y^2) \in \mathbb{R}^3$ by a projection function $\psi(p) = p'$, and a circle $C$ with center $(a,b) \in \mathbb{R}$ and radius $r$ is mapped to the plane $z= a (2x-a) + b (2y-a) + r^2$ in $\mathbb{R}^3$ by a function $\rho(C) = z$ such that $p$ is inside $C$ iff $\psi(p)$ lies below $\rho(C)$.
We now  recap the  steps of the preprocessing algorithm.
\begin{enumerate}
    \item Construct a spanning path $\Pi$ on the $2n$ endpoints of the segments in $S$ with stabbing number $\sigma$, that is, every potential query circle is stabbed by at most $\sigma$ segments of $\Pi$. For any input point set, there exists such a spanning path with a stabbing number $\sigma \in \mathcal{O}(\sqrt n)$. The respective construction takes $\mathcal{O}(n \sqrt n \log n)$ \cite{chazelle1989quasi}.
    \item Compute an $s$-tree $T$ on $\Pi$. $T$ is a balanced binary tree in which the root represents the whole path and its children are constructed recursively by dividing the path into two subpaths of roughly equal size. The leaf nodes coincide with single path segments. $T$ can be constructed in $\mathcal{O}(n \log n)$.
    \item Let $T_v$ denote the subtree of $T$ rooted at some node $v \in T$ and let $\Pi_v$ be the associated subpath of $\Pi$. Let $\Psi_v := \{\psi(p)|p \in \Pi_v\}$. For all $v \in T$, associate with $v$ the convex hull $CH(\Psi_v)$. Convex hull computation in three dimensions can be accomplished in $\mathcal{O}(n \log n)$. Thus, the total time for this step is in $\mathcal{O}(n \log^2 n)$.
    \item Let $\overline{\Pi}_v := \{\psi(y) | \overline{xy} \in S \vee x \in \Pi_v\}$ be the set of projected other endpoints of segments with one endpoint in $\Pi_v$. Construct a halfspace reporting data structure $H_v$ on $\overline{\Pi}_v$ for each $v \in T$. This process also takes $\mathcal{O}(n \log^2 n)$.
\end{enumerate}
The overall preprocessing time is dominated by the first step and thus takes $\mathcal{O}(n \sqrt n \log n)$. 
Given a query circle $C$, one first identifies the set of canonical nodes in $T$. These are the highest nodes $v \in T$ for which $CH(\Psi_v)$ is not stabbed by the plane  $\rho(C)$. Accordingly,  for a canonical node $v$, all points in $\Psi_v$ are either all below $\rho(C)$ or all above $\rho(C)$. It was proven in \cite{chazelle1989quasi} that the number of canonical nodes is in   $\mathcal{O}(\sigma \log n)$ and that they can be detected with a matching running time by traversing the s-tree using DFS. For a segment $\overline{xy} \in S$ to have an intersection of type (i) with $C$, we know that w.l.o.g. $x$ must be inside $C$ and $y$ outside $C$. By the properties of the applied geometric transformations, this is equivalent to the requirement that $\psi(x)$ is below $\rho(C)$ and $\psi(y)$ is above $\rho(C)$. Thus, if w.l.o.g. all points in $\Psi_v$ are below $\rho(C)$, we ask the halfspace reporting data structure $H_v$ associated with $v$ for the subset of points in $\overline{\Pi}_v$ that are above $\rho(C)$. Each point in this result set is now guaranteed to be the endpoint of a segment that has a type (i) intersection with $C$. The halfspace query takes $\mathcal{O}(\log n + q')$ time where $q'$ is the output size. Performing the query for all canonical nodes and aggregating their outputs  results in a query time of $\mathcal{O}(\sigma \log^2n + q)$ where $q$ is the total number of segments that have a type (i) intersection with $C$.

We could simply use this approach for our application. However, we have the additional knowledge that the segments in our input form a path. The hope is to utilize this knowledge to improve the performance. As the  stabbing number of our input path is assumed to be  $\sigma$ with respect to the query circles we are interested in, we can fully skip the first (and most costly) step of the above described preprocessing algorithm. In addition, we can omit the fourth step and significantly simplify the query algorithm as follows: We also query the $s$-tree with $\rho(C)$. But instead of the canonical nodes, we identify the leaf nodes for which $\rho(C)$ intersects $CH(\Psi_v)$. These can be identified by using DFS and not exploring subtrees for which $\rho(C) \cap CH(\Psi_v) = \emptyset$. The segments associated with these leaf nodes are exactly the segments with a type (i) intersection with $C$. 
If there are $q$ such segments, DFS traverses $\mathcal{O}(q \log n)$ tree nodes to identify them. The intersection check of $\rho(C)$ with $CH(\Psi_v)$  can be performed in $\mathcal{O}(\log n)$ time, resulting in an overall query time of $\mathcal{O}(q \log^2 n)$.  The following lemma summarizes our findings.
\begin{lemma}
    A path with $n$ segments can be preprocessed in $\mathcal{O}(n \log^2 n)$ into a data structure that reports the $q$ segments that have a type (i) intersection with a query circle in $\mathcal{O}(q \log^2n)$.
\end{lemma}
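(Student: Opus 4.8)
The plan is to instantiate the data structure as the $s$-tree $T$ built \emph{directly} on the input path $P$, with every node augmented by a three-dimensional convex hull. For preprocessing I would perform only Steps~2 and~3 of the generic construction: first build the $s$-tree $T$ on $P$ in $\mathcal{O}(n\log n)$; then, for every node $v\in T$, compute and store the convex hull $CH(\Psi_v)$ of the lifted vertices of the associated subpath $\Pi_v$, together with an auxiliary representation that supports directional extreme-point queries on $CH(\Psi_v)$ in $\mathcal{O}(\log|\Psi_v|)$ time (e.g.\ a Dobkin--Kirkpatrick hierarchy, built in time linear in the hull size). Since the subpaths of the nodes on a fixed level of $T$ together contain $\mathcal{O}(n)$ vertices, hull construction on one level costs $\mathcal{O}(n\log n)$, and over the $\mathcal{O}(\log n)$ levels this sums to $\mathcal{O}(n\log^2 n)$, which dominates and gives the claimed preprocessing bound. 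Steps~1 and~4 of the generic scheme are skipped: Step~1 is unnecessary because $P$ is itself a spanning path of stabbing number $\sigma$ with respect to the query circles, and Step~4 (the halfspace-reporting structures) is made obsolete by the cheaper query below.

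For a query circle $C$, I would run a DFS on $T$ from the root and, at each visited node $v$, test whether the plane $\rho(C)$ crosses $CH(\Psi_v)$; this test amounts to checking whether the affine functional whose sign encodes on which side of $\rho(C)$ a lifted point lies attains both a negative and a non-negative value over $\Psi_v$, which by the extreme-point structure on $CH(\Psi_v)$ takes $\mathcal{O}(\log n)$ time. If $\rho(C)$ does not cross $CH(\Psi_v)$, the subtree rooted at $v$ is pruned; otherwise the DFS descends into both children. Every leaf that is reached and whose (degenerate, one-segment) hull is crossed has its path segment reported.

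Correctness rests on the equivalence: \emph{$\rho(C)$ crosses $CH(\Psi_v)$ if and only if $\Pi_v$ contains at least one segment with a type (i) intersection with $C$}, which I would prove by a walk-along-the-subpath argument. For the forward implication, a type (i) segment of $\Pi_v$ has one endpoint inside $C$ and one outside, so by the properties of $\psi$ and $\rho$ one lifted endpoint lies below and the other above $\rho(C)$, hence $\rho(C)$ separates two points of $\Psi_v$ and must cross their convex hull. For the reverse implication, if $\rho(C)$ crosses $CH(\Psi_v)$ then $\Psi_v$ has a point below and a point above $\rho(C)$, i.e.\ $\Pi_v$ contains a vertex inside $C$ and a vertex outside $C$; since $\Pi_v$ is a contiguous subpath, walking from one such vertex to the other along its edges we encounter a consecutive pair split by the circle boundary, and that edge is a type (i) segment. (Note that a type (ii) edge has both endpoints outside $C$, hence both lifted endpoints strictly above $\rho(C)$, so the chord between them stays above $\rho(C)$ and the edge is correctly \emph{not} reported.) Applied at the leaves, the equivalence shows the query reports exactly the set of type (i) segments.

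For the running time I would charge visited nodes to reported segments. By the equivalence, a node with a crossed hull is an ancestor of the leaf of some reported segment; a leaf has $\mathcal{O}(\log n)$ ancestors, so at most $\mathcal{O}(q\log n)$ nodes have a crossed hull, and the DFS additionally inspects only the children of such nodes, so the number of visited nodes is $\mathcal{O}(q\log n)$ (for $q=0$ only the root is tested, in $\mathcal{O}(\log n)$ time). Each visit performs one $\mathcal{O}(\log n)$ crossing test, yielding the $\mathcal{O}(q\log^2 n)$ query bound. The step I expect to demand the most care is precisely this crossing test: one must fix a convex-hull representation that answers plane-versus-polytope (equivalently, directional extreme-point) queries in logarithmic time and verify that testing against $CH(\Psi_v)$ rather than against the raw set $\Psi_v$ is sound — which it is, since a plane separates a point set exactly when it separates its convex hull — and one must ensure the per-node auxiliary structures do not push the preprocessing beyond $\mathcal{O}(n\log^2 n)$, which holds because their total size is $\mathcal{O}(n\log n)$.
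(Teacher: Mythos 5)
Your proposal is correct and follows essentially the same route as the paper: build the $s$-tree directly on the path (skipping the spanning-path step), store the lifted convex hulls per node, drop the halfspace-reporting structures, and answer a query by a DFS that prunes subtrees whose hull is not crossed by $\rho(C)$, charging the $\mathcal{O}(q\log n)$ visited nodes to reported leaves. The only additions are details the paper leaves implicit (the explicit crossing/type-(i) equivalence via the contiguity of subpaths, and a Dobkin--Kirkpatrick-style structure for the $\mathcal{O}(\log n)$ plane-versus-hull test), and these are sound and stay within the stated preprocessing bound.
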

We emphasize that the performance of our data structure tailored to paths no longer explicitly depends on  $\sigma$. In the COMMA analysis on inputs adhering to our proposed realistic model we will however assume $q \leq \sigma$ for all disk boundaries.

\subsubsection{Dealing with intersections of type (ii)}
Segments with two intersection points with  $C$ are not detected by the data structure described above, as here both segment endpoints are outside of $C$ and hence projected above $\rho(C)$. Thus, their convex hull does not induce an intersection with said plane. Therefore, another data structure was used in \cite{gupta1994intersection} to detect segments with a type (ii) intersection based on the following characterization: For a segment $s=\overline{xy}$ to intersect $C$ twice, the center of  $C$ has to lie in the so called truncated strip of $s$. The strip is formed by the two lines through $x$ and $y$ that are perpendicular to $s$. The truncation happens at distance $r$ from $s$, resulting in a rectangle $R_s$ of length $|s|$ and height $2r$ centered at $s$. Based thereupon, the goal is to construct a data structure that preprocesses a given set of rectangles such that for a query point (the circle center) all rectangles that enclose the query point can be reported efficiently. As efficient data structures are known for the case where the input is a set of triangles, \cite{gupta1994intersection}  suggest to simply cut each rectangle into two triangles using one of its diagonals. Figure \ref{fig:sci2} (a) and (b) illustrate these concepts. \cite{gupta1994intersection} use the triangle enclosure data structure proposed in \cite{cheng1992algorithms} with a preprocessing time in $\mathcal{O}(n\sqrt n  \log^2 n)$ and a query time in $\mathcal{O}(\sqrt n \log^2 n + q)$ where $q$ is the output size. Alternatively, \cite{overmars1990storing} describe a data structure  for the same problem with a $\mathcal{O}(n \log^3 n)$ preprocessing time and a query time in $\mathcal{O}(n^\lambda + q)$ with $\lambda \approx 0.695$.
\begin{figure}
    \centering
    \includegraphics[width=0.85\linewidth]{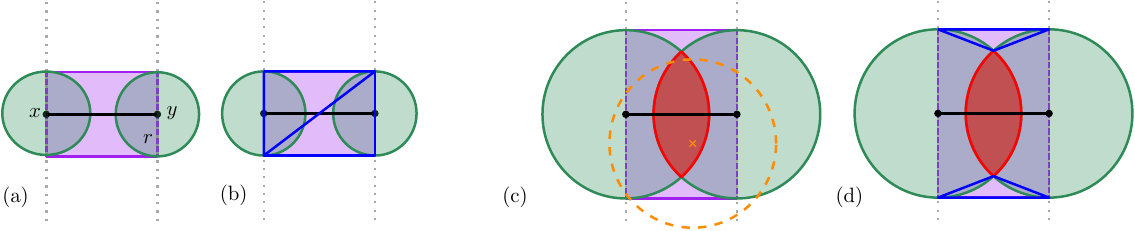}
    \caption{(a) Example segment $s=\overline{xy}$ (black) longer than $2r$. $R_s$ is depicted in purple. Its subdivision into triangles is indicated by the blue lines in (b). In (c), segment $s$ is shorter than $2r$. The disks $D_x$ and $D_y$ are depicted in green. Their intersection (red zone) is problematic as query circles with a center in that zone do not intersect $s$ as indicated by the orange example circle. (d) illustrates the alternative triangle construction for this case.}
    \label{fig:sci2}
\end{figure}

 The issue with this approach is that it might  report segments that do not actually intersect $C$. If the center of $C$ is in $D_x \cap D_y$, then $C$ contains both endpoints of $s$ and consequently there is no intersection between $C$ and $s$. Thus, circles with a center in $R_s \cap D_x \cap D_y$ have to be exempt from being reported (post filtering would impair the output-sensitivity). If $2r \leq |s|$, we have $D_x \cap D_y = \emptyset$ and hence this scenario cannot occur. But if $2r > |s|$, that is, the circle diameter exceeds the segment length,  $D_x$ and $D_y$ overlap inside $R$. Figure \ref{fig:sci2} (c) depicts this scenario. This issue can be fixed by constructing two triangles which connect the base sides of  $R_s$ to the  intersection points of $D_x$ and $D_y$ which are furthest from $s$, respectively. The triangles are  shown  in Figure \ref{fig:sci2} (d). By construction they do not contain any point in $D_x \cap D_y$ and  they fully cover the relevant area $R_s  \setminus (D_x \cup D_y)$. Therefore, we can construct a triangle enclosure data structure with two triangles per segment as before, but with a different triangle construction mechanism depending on the segment length and $r$.

 Putting together the data structures for type (i) and type (ii) segment-circle intersections, we get the following corollary.
 \begin{corollary}\label{cor:cis}
     A set of $n$ segments can be preprocessed in time $\mathcal{O}(n \sqrt n \log^2 n)$ such that the $q$ segments that intersect a query circle of fixed radius $r$ can be computed in $\mathcal{O}(\sqrt n \log ^2 n + q)$.
 \end{corollary}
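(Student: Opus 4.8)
The plan is to assemble the two data structures developed above for the two intersection types and to argue that their outputs partition the set of intersecting segments, so the running times simply add up. First I would build, on the segment set $S$, the type~(i) structure of \cite{gupta1994intersection} exactly as recapped: the spanning path $\Pi$ with stabbing number $\sigma \in \mathcal{O}(\sqrt n)$, the $s$-tree $T$ with the associated convex hulls $CH(\Psi_v)$, and the halfspace reporting structures $H_v$. Its preprocessing is dominated by the spanning path construction and runs in $\mathcal{O}(n \sqrt n \log n)$, and a query with a circle $C$ reports all $q_1$ segments with a type~(i) intersection in $\mathcal{O}(\sigma \log^2 n + q_1) = \mathcal{O}(\sqrt n \log^2 n + q_1)$.

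In parallel I would build the type~(ii) structure. For each segment $s = \overline{xy}$ I form its truncated strip $R_s$ and cut it into two triangles: via a diagonal of $R_s$ when $2r \le |s|$, and via the modified construction (joining each base side of $R_s$ to the intersection point of $\partial D_x$ and $\partial D_y$ furthest from $s$) when $2r > |s|$. Feeding these $2n$ triangles into the triangle-enclosure structure of \cite{cheng1992algorithms} costs $\mathcal{O}(n \sqrt n \log^2 n)$ preprocessing and answers a point-enclosure query in $\mathcal{O}(\sqrt n \log^2 n + q_2)$; querying with the center of $C$ returns exactly those triangles containing it, and hence, after deduplicating the at most two triangles per segment, exactly the $q_2$ segments with a type~(ii) intersection with $C$. (If one instead wanted the $\mathcal{O}(n^\lambda+q)$ query variant one would plug in \cite{overmars1990storing} here, but for this corollary we use \cite{cheng1992algorithms}.)

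For the query algorithm I would, given $C$, run both queries and return the union of their outputs. Correctness rests on the observation that a segment meets a circle in either exactly one point (type~(i): one endpoint inside $C$, one outside) or exactly two points (type~(ii): both endpoints outside $C$, closest point of the segment inside), and these two cases are mutually exclusive up to measure-zero tangency or endpoint-on-boundary configurations, which can be handled by a symbolic perturbation or the usual general-position assumption. The type~(i) structure is exact by the properties of the lifting $\psi,\rho$, and the type~(ii) structure is exact because the chosen triangles cover precisely $R_s \setminus (D_x \cup D_y)$ --- the locus of circle centers that place the closest point of $s$ strictly inside $C$ while keeping both endpoints outside --- and never contain a point of $D_x \cap D_y$. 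Hence the two output sets are disjoint and their union is exactly the set of intersecting segments, so $q_1 + q_2 = q$. Summation then gives preprocessing $\mathcal{O}(n\sqrt n \log n) + \mathcal{O}(n \sqrt n \log^2 n) = \mathcal{O}(n\sqrt n \log^2 n)$ and query time $\mathcal{O}(\sqrt n \log^2 n + q_1) + \mathcal{O}(\sqrt n \log^2 n + q_2) = \mathcal{O}(\sqrt n \log^2 n + q)$.

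The only genuinely delicate point is the correctness of the type~(ii) component in the short-segment regime $2r > |s|$: I would verify carefully that the two modified triangles lie inside $R_s$, are disjoint from $D_x \cap D_y$, and together cover all of $R_s \setminus (D_x \cup D_y)$, since any uncovered sliver there would cause a false negative and any overreach into $D_x \cap D_y$ a false positive. Everything else is routine bookkeeping over the already-established building blocks.
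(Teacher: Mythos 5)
Your proposal is correct and matches the paper's (implicit) argument: the corollary is obtained exactly by combining the general type~(i) structure of \cite{gupta1994intersection} (spanning path with stabbing number $\mathcal{O}(\sqrt n)$, $s$-tree, halfspace reporting) with the type~(ii) triangle-enclosure structure of \cite{cheng1992algorithms} applied to the truncated strips, using the corrected triangle construction for segments shorter than $2r$, and adding the preprocessing and query bounds. You correctly pick the Cheng et al.\ variant here (the \cite{overmars1990storing} trade-off is reserved for the path-specific Lemma~\ref{lem:cis}), so no further changes are needed.
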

For our special case in which the input segments form a path, using the triangle enclosure data structure by \cite{overmars1990storing}  provides a different trade-off between preprocessing and query time.
\begin{lemma}\label{lem:cis}
     A path with $n$ segments can be preprocessed in $\mathcal{O}(n \log^3 n)$ into a data structure that reports the $q$ segments that have an intersection with a query circle of fixed radius $r$  in $\mathcal{O}(n^{0.695} + q \log^2n)$. 
\end{lemma}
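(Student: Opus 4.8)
The plan is to assemble the data structure for a path with $n$ segments by combining the two type-specific structures developed above, just as in Corollary~\ref{cor:cis}, but replacing the generic ingredients with path-aware or differently balanced ones. Concretely, I would maintain (a) the type-(i) structure from the previous lemma, which preprocesses a path in $\mathcal{O}(n\log^2 n)$ and answers a query in $\mathcal{O}(q\log^2 n)$, and (b) a type-(ii) structure built as a triangle enclosure structure on the $2n$ triangles obtained from the truncated strips $R_s$, using the case distinction on $2r$ versus $|s|$ so that no spurious reports occur. For the latter I would plug in the data structure of Overmars and Schwarzkopf~\cite{overmars1990storing} instead of the one of~\cite{cheng1992algorithms}, giving $\mathcal{O}(n\log^3 n)$ preprocessing and $\mathcal{O}(n^{\lambda}+q)$ query time with $\lambda\approx 0.695$. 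A query circle is then answered by querying both structures and taking the union of the reported segments; every segment intersecting $C$ is reported by exactly one of the two, so the output is exactly the $q$ segments meeting $C$.

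The running time accounting is then routine. Preprocessing: $\mathcal{O}(n\log^2 n)$ for the type-(i) structure plus $\mathcal{O}(n\log^3 n)$ for the type-(ii) structure, dominated by the latter, so $\mathcal{O}(n\log^3 n)$ overall. Note in particular that because the segments form a path, the type-(i) structure no longer needs the $\mathcal{O}(n\sqrt n\log n)$ spanning-path-with-low-stabbing-number construction that dominated preprocessing in Corollary~\ref{cor:cis}; this is exactly what lets us drop below $n\sqrt n$. Query: if $q_1$ segments have a type-(i) intersection and $q_2$ have a type-(ii) intersection with $q_1+q_2=q$, the total query cost is $\mathcal{O}(q_1\log^2 n) + \mathcal{O}(n^{\lambda}+q_2)$. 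Since $q_2\le q$ and $q_1\log^2 n\le q\log^2 n$, this is $\mathcal{O}(n^{\lambda}+q\log^2 n)$, which is the claimed bound.

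The one point that needs a genuine argument rather than bookkeeping is the correctness of the type-(ii) triangle construction in the short-segment regime $2r>|s|$, i.e. that the two replacement triangles connecting the base sides of $R_s$ to the two extreme intersection points of $D_x$ and $D_y$ (i) contain no point of $D_x\cap D_y$ and (ii) cover all of $R_s\setminus(D_x\cup D_y)$, so that a circle centered at a point $z$ intersects $s$ twice if and only if $z$ lies in at least one of the triangles. This was asserted in the discussion of Figure~\ref{fig:sci2}(d); in the proof I would make it precise by noting that $C$ with center $z$ has a type-(ii) intersection with $s=\overline{xy}$ exactly when $z\in R_s$ and $z\notin D_x\cup D_y$ (center inside the truncated strip, both endpoints outside $C$), and then verifying by elementary planar geometry that $R_s\setminus(D_x\cup D_y)$ is precisely the union of the two triangles; the boundary arcs of $D_x$ and $D_y$ through the chosen extreme intersection points are the curves being replaced by straight chords, and convexity of the region bounded away from $D_x\cap D_y$ gives the inclusion in both directions.

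I expect the main obstacle to be nothing deep but rather stating the short-segment geometry cleanly: one must be careful that the ``extreme'' intersection points of $D_x$ and $D_y$ are well defined (they are, since $|s|<2r<\,$(distance conditions) guarantees $D_x\cap D_y$ is a nonempty lens strictly inside $R_s$), and that the two triangles together with the handling from the long-segment case degenerate correctly at the threshold $2r=|s|$. Everything else is a direct combination of the previously established lemmas with the $\mathcal{O}(n\log^3 n)/\mathcal{O}(n^{\lambda}+q)$ triangle enclosure structure of~\cite{overmars1990storing}.
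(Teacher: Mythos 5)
Your proposal is correct and matches the paper's (implicit) argument exactly: combine the path-tailored type-(i) structure with $\mathcal{O}(n\log^2 n)$ preprocessing and $\mathcal{O}(q\log^2 n)$ query with the triangle enclosure structure of \cite{overmars1990storing} on the corrected triangles for type-(ii) intersections, and add the bounds. The only addition you make is spelling out the short-segment triangle correctness, which the paper establishes in its type-(ii) discussion rather than in the proof of this lemma; no gap either way.
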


\subsection{COMMA acceleration}
The goal is now to leverage our new segment-circle intersection data structure for accelerating the interval computation phase of COMMA. For each disk $D_i$, we query the data structure with the disk boundary circle $C_i$ to retrieve the set of segments that intersect $C_i$. Then, we compute the explicit intersection points for each segment. To get the final intervals, we also need the travel time from the start point of $P$ to the respective intersection point.  To have  access to this value in constant time, we also compute a prefix sum array for the edge costs along $P$. Querying this data structure with an edge segment index and a point on that segment only demands to interpolate the cost assigned to that segment in the array up to the given point. The resulting intersection point timestamps are sorted to construct the respective intervals.  Using this procedure, we can compute the set of intervals for each $D_i$ without having to traverse any edges that are fully contained in $D_i$. Note that the DAG based algorithm always has to consider the endpoints of all of these segments to identify the candidate location sets. Thus, the running time of COMMA is significantly reduced while the $\mathcal{O}(k n^2)$ running time of the baseline remains unchanged in our realistic model.
\begin{theorem}
    The running time of COMMA on inputs adhering to the realistic model is in  $\mathcal{O}(k n^{0.695} + n \log^3 n +  k \sigma \log^2 n)$.
\end{theorem}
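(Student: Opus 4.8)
The plan is to assemble the claimed running time by accounting separately for each phase of the accelerated COMMA pipeline under the realistic-input assumptions, namely that each disk overlaps only a constant number of other disks and that each disk boundary $C_i$ is intersected by at most $\sigma$ path segments (so $q \leq \sigma$ in every query). I would first invoke Lemma~\ref{lem:cis}: preprocessing the path $P$ (with $n$ segments) into the combined type-(i) and type-(ii) segment-circle intersection data structure costs $\mathcal{O}(n \log^3 n)$, and a single query returns the $q$ intersecting segments in $\mathcal{O}(n^{0.695} + q \log^2 n)$ time. Building the prefix-sum array of edge costs along $P$ is an additional one-time $\mathcal{O}(n)$ that is absorbed. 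So the total one-time preprocessing cost is $\mathcal{O}(n \log^3 n)$, contributing the $n \log^3 n$ term.

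Next I would bound the interval-computation phase. For each of the $k$ measurements we issue one query to the intersection data structure, costing $\mathcal{O}(n^{0.695} + q_i \log^2 n)$ where $q_i$ is the number of segments meeting $C_i$; converting each returned segment into an explicit intersection point and looking up its timestamp via the prefix-sum array is $\mathcal{O}(1)$ per segment, and sorting the $\mathcal{O}(q_i)$ timestamps for $D_i$ costs $\mathcal{O}(q_i \log q_i) \subseteq \mathcal{O}(q_i \log n)$. Using $q_i \leq \sigma$, summing over $i \in [k]$ gives $\mathcal{O}(k n^{0.695} + k \sigma \log^2 n)$ for this phase, which supplies the $k n^{0.695}$ term and part of the $k\sigma\log^2 n$ term. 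Then I would account for the two sweeps: by the realistic-model analysis already carried out in the text, the total number of intervals over all disks is $\mathcal{O}(k\sigma)$, each sweep performs a binary search (now over lists of length $\mathcal{O}(\sigma)$, hence $\mathcal{O}(\log\sigma)\subseteq\mathcal{O}(\log n)$) plus $\mathcal{O}(1)$ updates per interval; since each disk overlaps only $\mathcal{O}(1)$ others, the overlap-handling subdivision does not blow up the interval count beyond $\mathcal{O}(k\sigma)$. This yields $\mathcal{O}(k\sigma\log n)$ per sweep, absorbed into $\mathcal{O}(k\sigma\log^2 n)$. Finally the location-retrieval step is $\mathcal{O}(k\log n)$, which is dominated. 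Adding the phases gives $\mathcal{O}(n\log^3 n + k n^{0.695} + k\sigma\log^2 n)$, as claimed.

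I expect the only real subtlety — rather than a hard obstacle — to be justifying that the overlap-handling procedure from Section~3.2 still runs within budget in the realistic model. The worst-case analysis there bounds interval-access operations by $|I_{i-1}| + |I_i| \in \mathcal{O}(kn)$, but here I must instead argue that when disk $D_i$ overlaps only $\mathcal{O}(1)$ neighbors, the number of new interval starting points inherited from $I_{i-1}$ is $\mathcal{O}(\sigma)$ (not $\mathcal{O}(k\sigma)$), so $|I_i|$ stays $\mathcal{O}(\sigma)$ throughout and the per-sweep cost is genuinely $\mathcal{O}(k\sigma\log n)$. I would also double-check that the $q\log^2 n$ query term of Lemma~\ref{lem:cis} — as opposed to the additive $+q$ in Corollary~\ref{cor:cis} — is what gets summed, since it is the path-tailored data structure of Lemma~\ref{lem:cis} (with its $\mathcal{O}(n\log^3 n)$ preprocessing and $n^{0.695}$ query term) that produces exactly the stated bound; a brief remark that $q_i\log^2 n \leq \sigma\log^2 n$ suffices to close this. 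Everything else is routine term-by-term addition, and since each of the three summands dominates a different regime, no further simplification of the bound is possible.
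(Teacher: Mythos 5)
Your proposal is correct and follows essentially the same decomposition as the paper's proof: $\mathcal{O}(n\log^3 n)$ preprocessing via Lemma~\ref{lem:cis} plus the prefix-sum array, $k$ queries costing $\mathcal{O}(n^{0.695}+q_i\log^2 n)$ each with $q_i\leq\sigma$, interval construction and sorting in $\mathcal{O}(k\sigma\log\sigma)$, and sweeps bounded by the $\mathcal{O}(k\sigma)$ total interval count from the realistic model. The subtlety you flag about overlap handling not inflating the interval count is precisely what the paper's realistic-model assumption (constantly many overlapping disks) is invoked for, so your treatment matches the intended argument.
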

\begin{proof}
    The prefix sum array can be computed in $\mathcal{O}(n)$ and the segment-circle intersection data structure in $\mathcal{O}(n \log^3 n)$.
    The time to compute the boundary intersection points for  $D_i$ is in $\mathcal{O}(n^\lambda + q_i \log^2 n) $ for $\lambda=0.695$ where $q_i$ denotes the number of path segments intersecting the boundary of $D_i$. Thus, the total time for intersection queries is $\sum_{i=1}^k n^\lambda +  q_i \log^2 n = k n^\lambda + \log^2 n \sum_{i=1}^k q_i$. 
    As we have $|q_i| \leq \sigma$, it follows that $\sum_{i=1}^k q_i \in \mathcal{O}(k \sigma)$.  With the help of the prefix sum array, the final interval computation takes $\mathcal{O}(k \sigma \log \sigma)$, including sorting of the intersection points by timestamp. In total, the interval computation phase takes $\mathcal{O}(k n^\lambda + k \sigma \log^2 n + n \log^3 n)$ time. The running time for the  sweeps is in $\mathcal{O}(k \sigma \log \sigma)$. Thus, the overall running time is  dominated by the interval computation phase.
\end{proof}
Alternatively, using Corollary \ref{cor:cis} instead of Lemma \ref{lem:cis}, the running time of COMMA is in $\mathcal{O}(k\sqrt n \log^2n + n\sqrt n \log^2n + k \sigma \log \sigma)$. These results allow us to explore the role of $\sigma$ for the overall running time of COMMA. In practice, $\sigma$ is expected to be a small constant. However, we observe that as long as $\sigma \in \mathcal{O}(\sqrt n)$ the COMMA running time is in $\mathcal{O}(k n^{0.695} + n \log^3 n)$ or $\mathcal{O}(k\sqrt n \log^2n + n\sqrt n \log^2n)$, respectively. So even for $k \in \Theta(n)$, we  now have  a running time that is clearly subquadratic in $n$ while the running time  of the baseline would be cubic for such a $k$, even in the realistic model.

\section{Experimental Results}

We implemented COMMA as well as the DAG baseline algorithm in C++. Experiments were conducted on a single core of an AMD Ryzen Threadripper 3970X.

\subsection{Benchmark data}
In the evaluation, we use two types of data sets: generated paths and measurement sequences as well as  GTFS data. The former is created as follows. To get a  path $P$ of length $n$, we add vectors with coordinates that are uniformly sampled at random. The time values for the points are accumulated in the same way. The corresponding measurement sequence $M$ is generated as follows. We first randomly sample $k$ values in $[0,1]$ and sort them ascendingly. These values are used to find positions along the path with an associated time value. Perturbing said points gives us our generated input. Figure \ref{fig:MinimalExampleCommaBaseline} shows a small example and Figure \ref{fig:commaExampleBig} a large one.

For real data, we used GTFS bus timetables. We extract the  coordinates present in \textit{shapes.txt} of  a bus line, and its corresponding stops in \textit{stops.txt} (defining a trip of the vehicle). The time values for the trip are present in \textit{stop\_times.txt}. Using said times together with the \textit{shape\_dist\_traveled} data in \textit{stop\_times.txt}, we can induce valid time values for the shape of the bus line. 
An example is shown in Figure \ref{fig:Olbia}.

\begin{figure}[h!]
    \centering
    \includegraphics[width=0.3\textwidth]{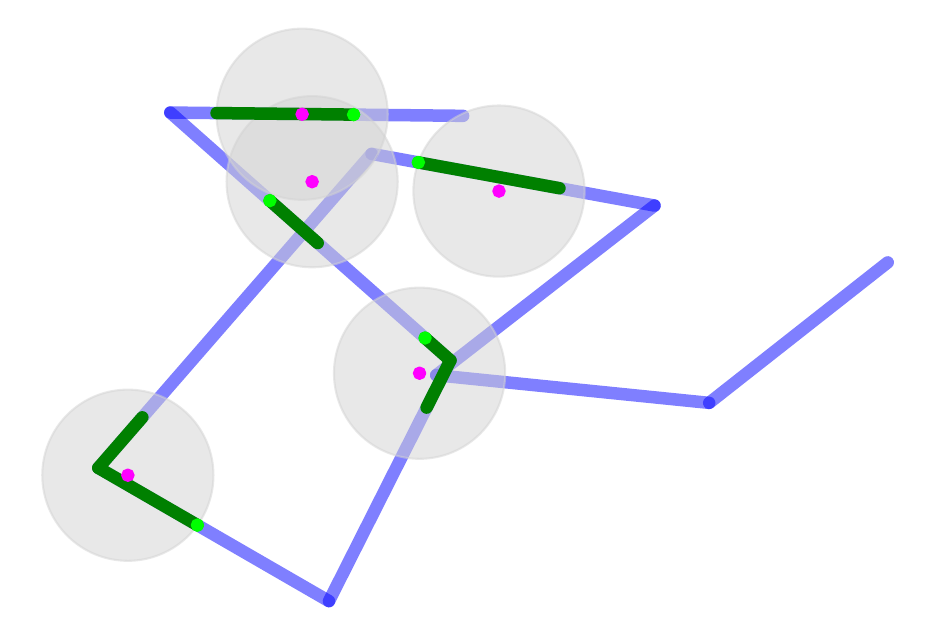}
    \hspace{1cm}
    \includegraphics[width=0.3\textwidth]{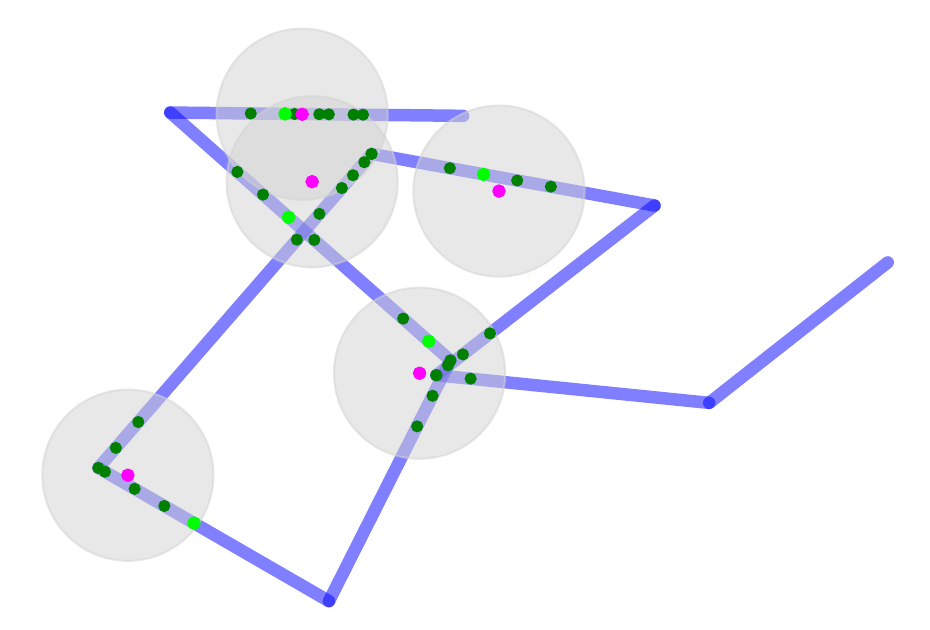}
    \caption{Exemplary execution of COMMA and the baseline for $n = 10, k = 5$ and $r = 0.5$. The sequence points and the mapped points are depicted in magenta and light green respectively. Shown in green are the intervals of COMMA and the candidate points of the baseline. }
    \label{fig:MinimalExampleCommaBaseline}
\end{figure}

\begin{figure}[h!]
    \centering
    \includegraphics[width=\textwidth]{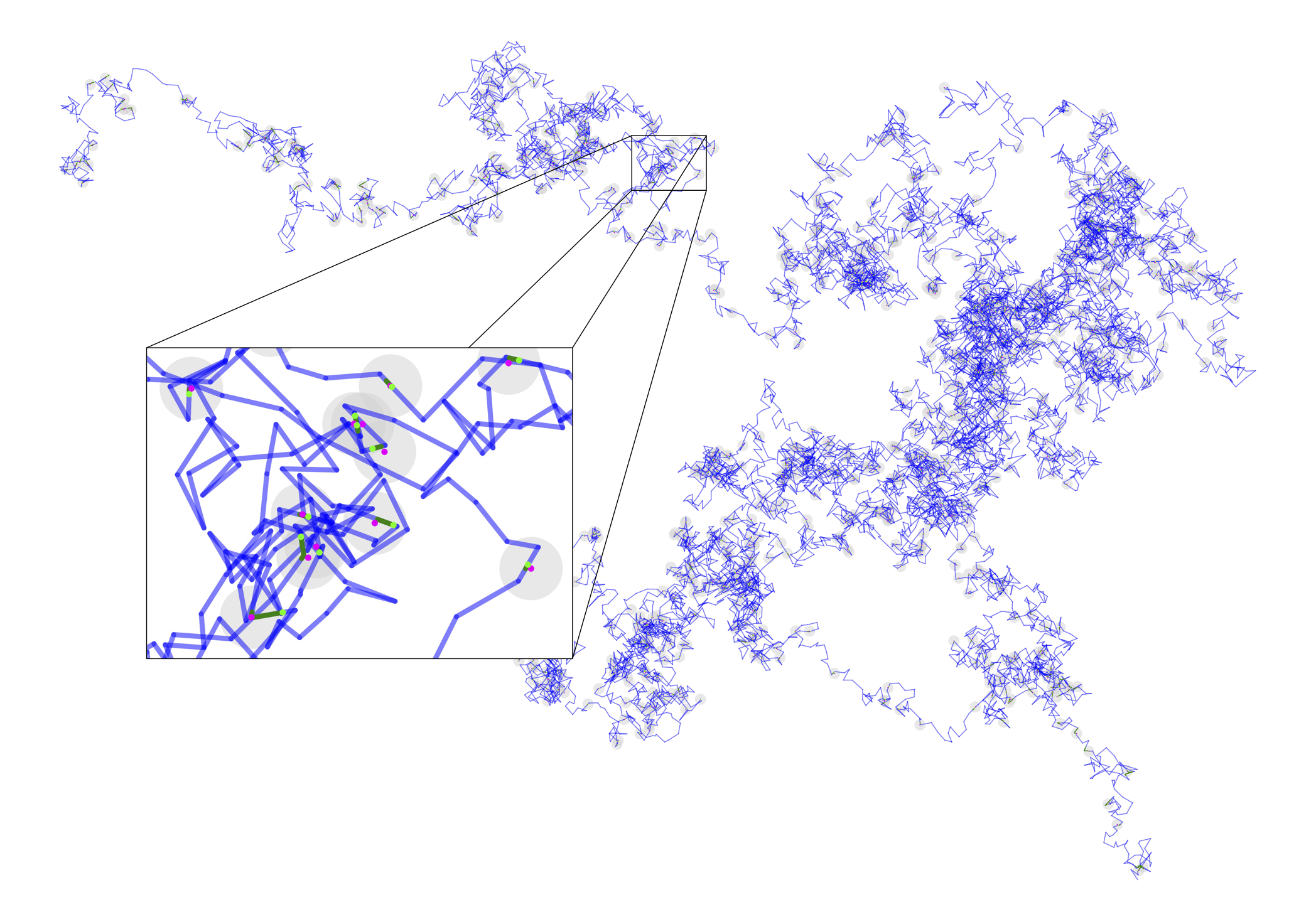}
    \caption{Execution of COMMA for $n=10000$, $k=1000$ and $r=1$. The sequence points and the mapped points are depicted in magenta and light green respectively. Shown in green are the intervals of COMMA.}
    \label{fig:commaExampleBig}
\end{figure}

\subsection{Segment-circle intersection data structure}
For the DAG algorithm as well as for COMMA, we need to retrieve the intervals of $P$ inside the disks $D_1, \dots, D_k$, and for the DAG algorithm additionally the contained segment endpoints as those serve as candidate locations. However, it might happen that the DAG algorithm does not find a feasible solution when restricted to segment endpoints inside the disks. Thus, we also implement a sampling variant, which additionally places possible locations along the edges inside $D_i$ equidistantly (see again Figure \ref{fig:MinimalExampleCommaBaseline}). This allows us to study what sampling rates are needed to find feasible mapping sequences.

Naively, the intervals and segment endpoints can be retrieved in time $\mathcal{O}(kn)$ for both algorithms. But as discussed in our theoretical analysis, interval computation might easily dominate the running time of COMMA and thus using a suitable segment-circle intersection data structure is advised. However, the data structures we used in the theoretical analysis are quite involved, using  transformations to three dimensions as well as  several nested data structures (especially for intersection of type (ii)) \cite{gupta1994intersection}. To the best of our knowledge, no library implementation is available. Engineering these data structures would certainly be interesting, but a naive implementation is expected to be rather slow. Thus,  we decided to use  CGAL's AABB Tree in both algorithms, DAG and COMMA. It is a hierarchical data structure where each node represents an axis-aligned bounding box (AABB) that encloses the set of geometric primitives that are represented by the respective leaf nodes. In our case, the leaf nodes store the path segments. Thus, the search algorithm is similar to the one we described in Section \ref{sec:intersection} only that here the oracle that decides whether to explore a branch of the tree might be false positive in case only the bounding box of the segment and the disk intersect but not the elements themselves.  But it allows us to handle both possible types of segment-circle intersection in one data structure as well as segment endpoint extraction for the DAG baseline. 

\begin{figure}
    \centering
    \includegraphics[width=0.48\textwidth]{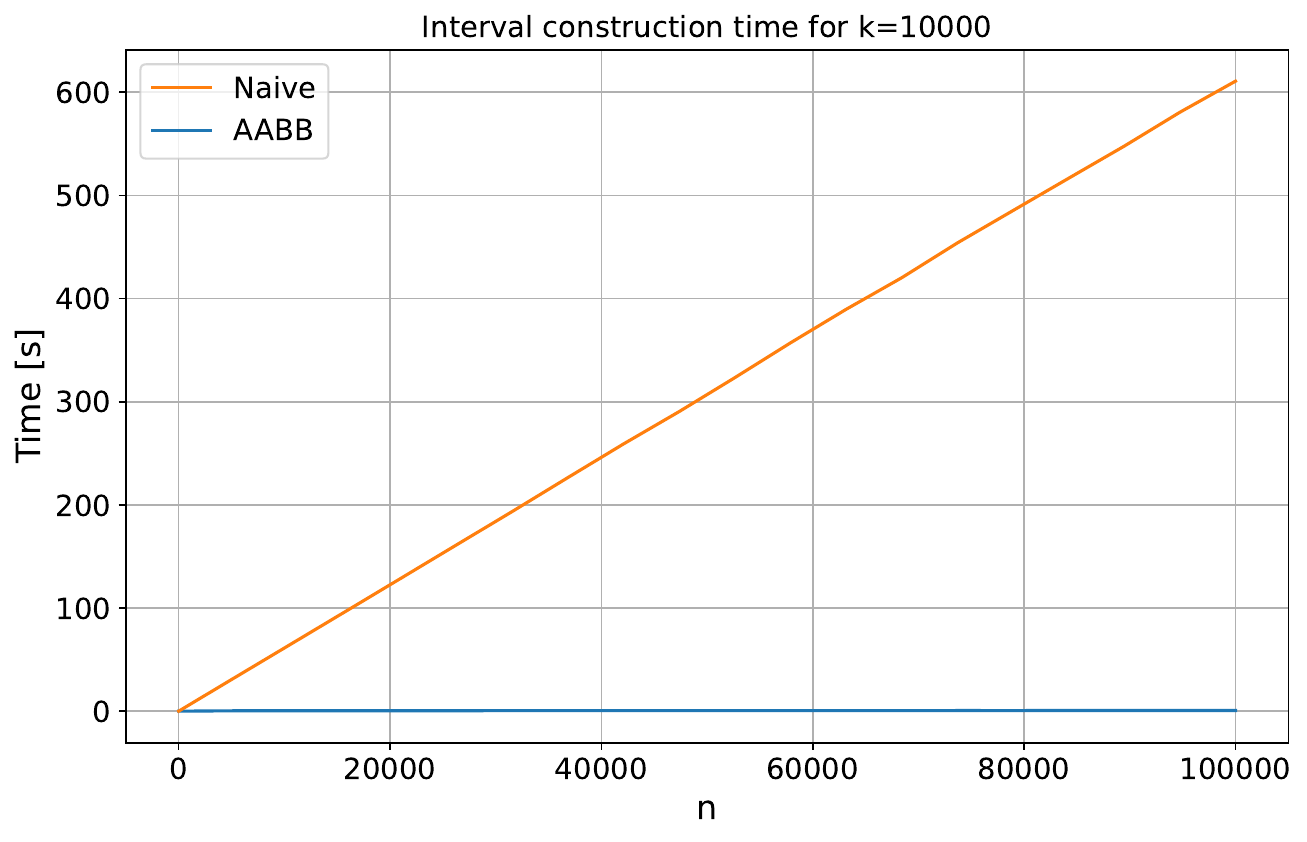}
    \hfill
    \includegraphics[width=0.48\textwidth]{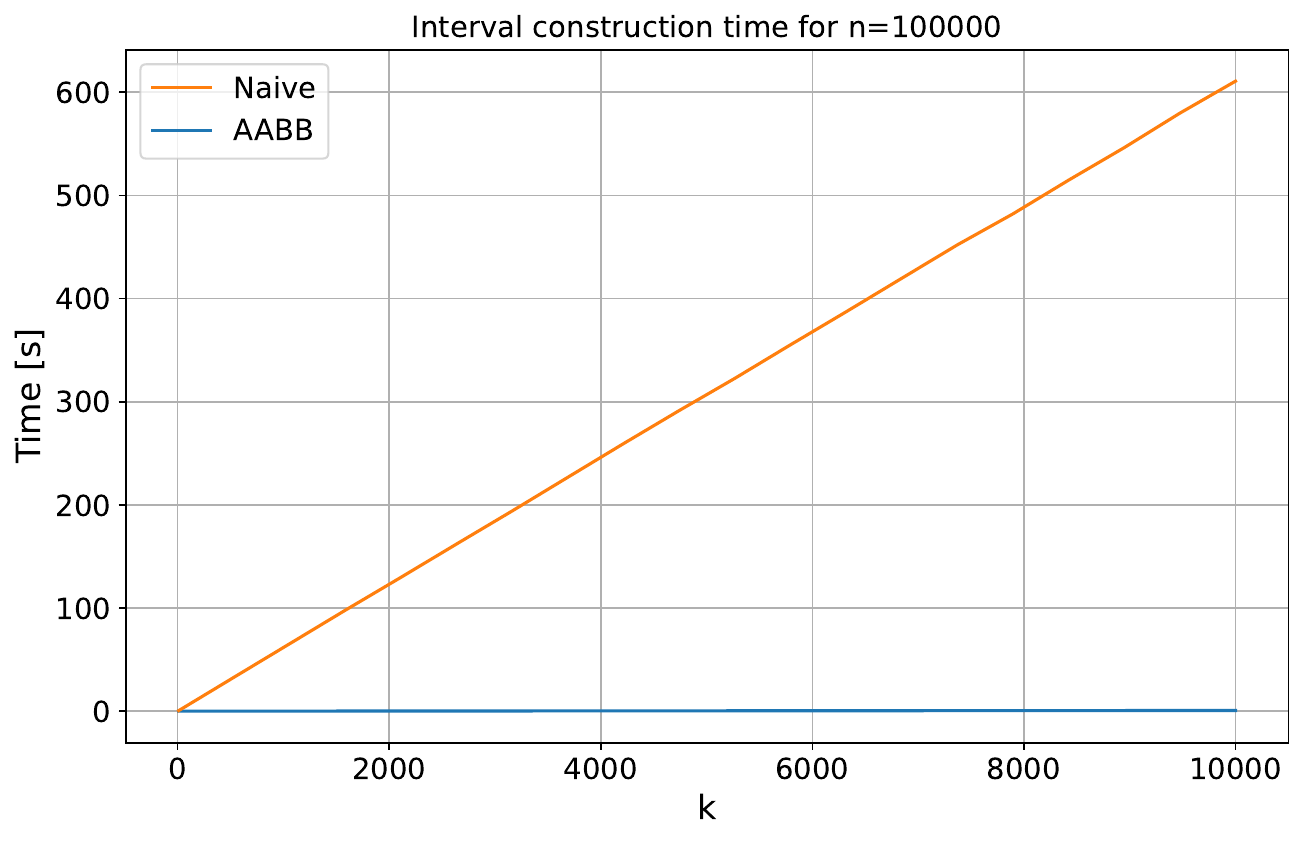}
    \caption{Naive interval construction times  and times using the AABB Tree. The left plot fixes the number of measurements $k$ while the right plot does the same for the length $n$ of $P$.}
    \label{fig:IntervalComp}
\end{figure}

In Figure \ref{fig:IntervalComp}, we evaluate the running time for interval and segment extraction on generated paths with varying values of $n$ and $k$. 
We observe that the running time of the naive approach  increases linear with both $n$ and $k$ and is in the order of minutes already for small parameter values, while the AABB Tree provides results very quickly for all choices of $n$ and $k$.

\subsection{COMMA versus DAG}
\begin{figure}
    \centering
    \includegraphics[width=0.48\textwidth]{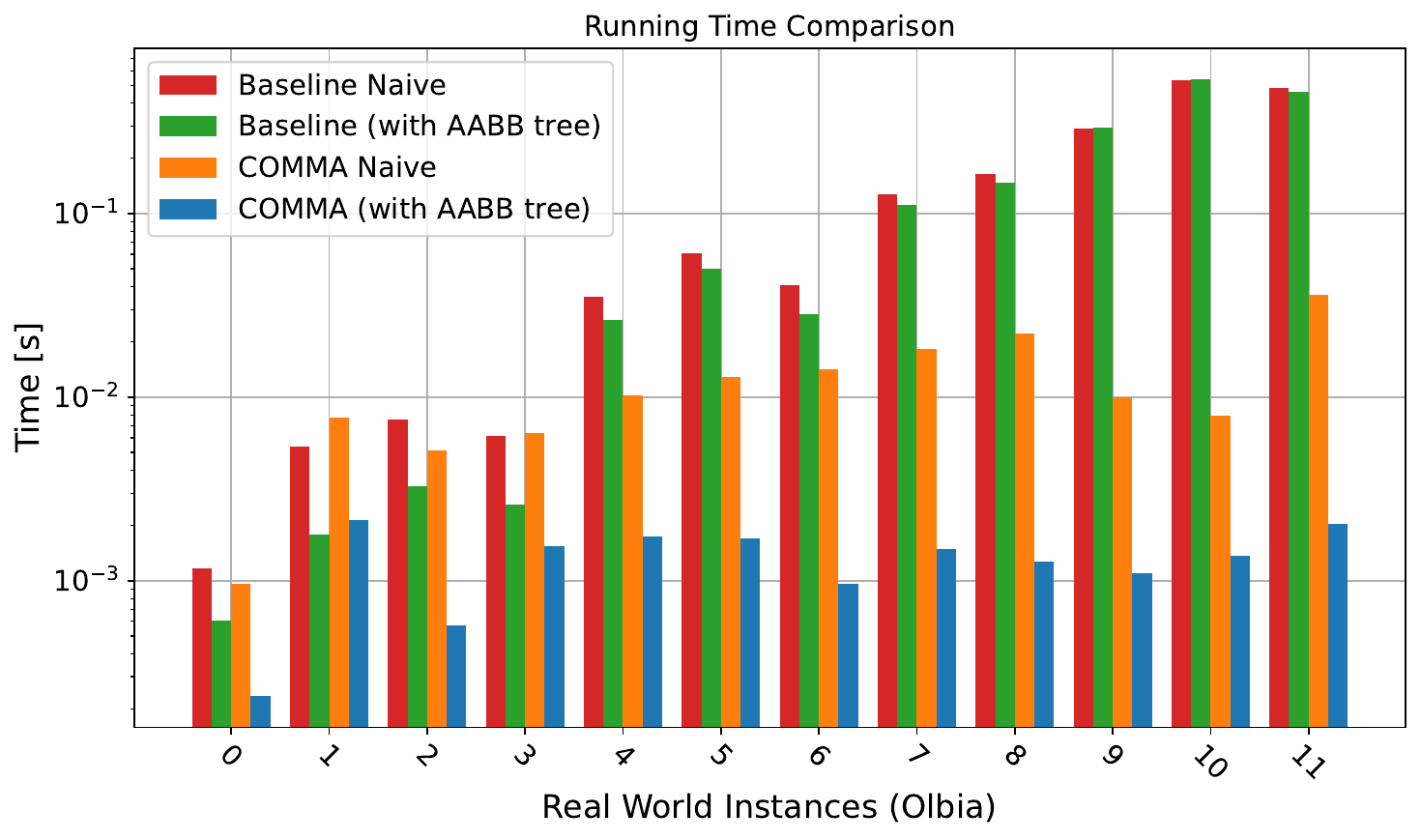}
    \hfill
    \includegraphics[width=0.48\textwidth]{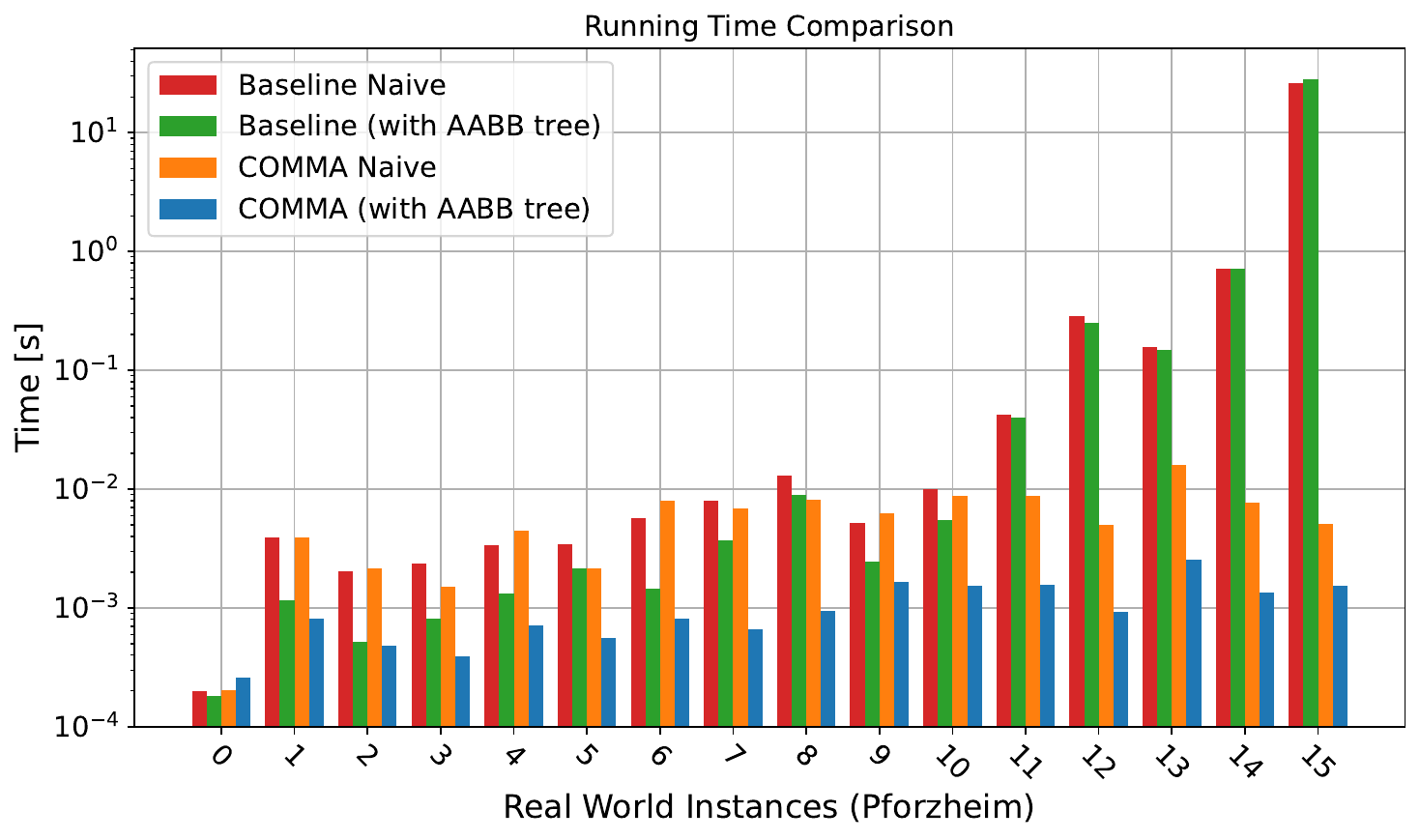}
    \caption{Running time comparison of the COMMA baseline variants for 12 and 16 real world GTFS instances for Olbia and Pforzheim, with average instance sizes of $n \approx 704, k \approx 42$ and $n \approx 634, k \approx 16$ respectively. The algorithms were run for the smallest radius that yielded a solution for COMMA and, given this radius, for the first feasible sampling distance of the baseline.}
    \label{fig:RWD}
\end{figure}
Next, we compare COMMA to the DAG baseline on real and generated input data. In Figure \ref{fig:RWD}, a comparison of the algorithms on  GTFS bus stop mapping instances is provided. For the baseline, the running time difference between using or not using the AABB tree is rather small, as most of the effort is required to construct the DAG. We remark that the unsuccessful runs of the baseline due to a too low sampling rate to identify a feasible match are not even included in the timings.  For COMMA, the running time is already better on the larger instances without the AABB tree, and with the help of the AABB tree it achieves speed-ups of up to four orders of magnitude. This allows for much faster GTFS data processing especially for data sets with thousands of routes. 

\begin{figure}
    \centering
    \includegraphics[width=0.5\textwidth]{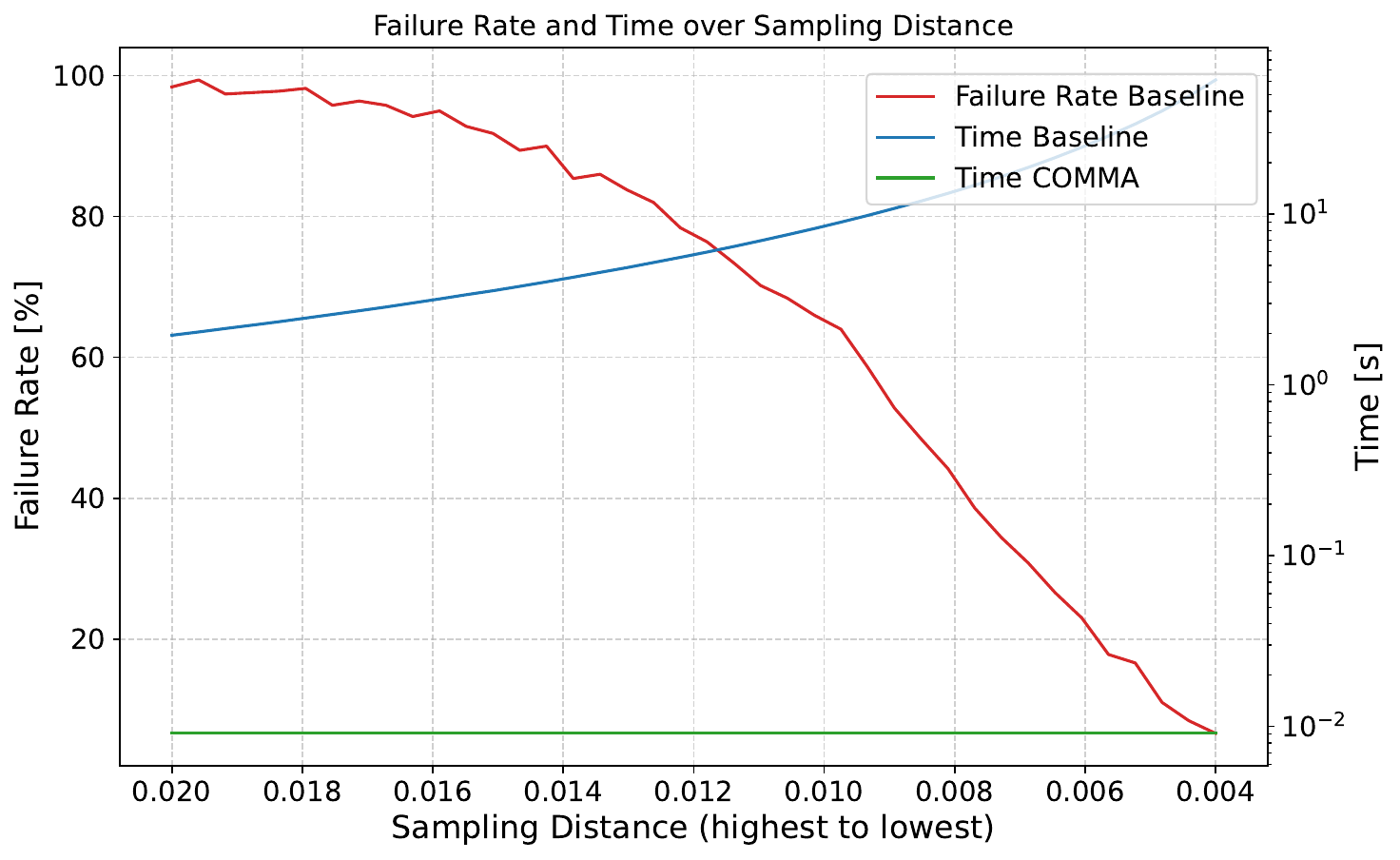}
    \hfill
    \includegraphics[width=0.47\textwidth]{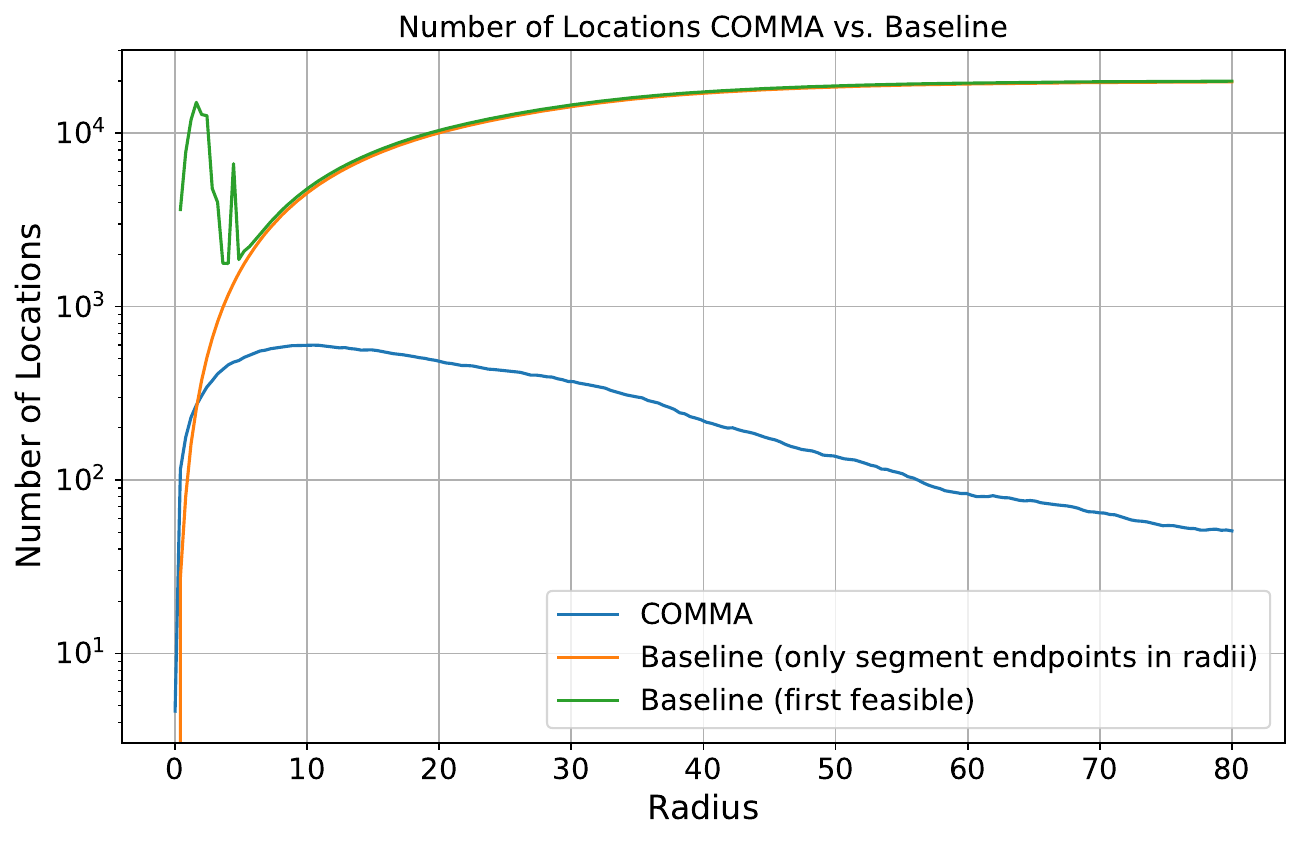}
    \caption{Depicted on the left are the failure rate of the baseline algorithm as well as the time of the baseline algorithm and COMMA over sampling distances varying from 0.02 to 0.004 for 500 instances of $n=1000$ and $k=100$. The COMMA runtime is constant as it is independent of the sampling distance. Shown on the right is the number of locations that are considered by the algorithms depending on the radius for 50 instances of size $n = 1000$ and $k = 20$ that is varied from 0.01 to 80. For COMMA these locations are the number of endpoints of the intervals, and for the baseline the number of sampling candidates.}
    \label{fig:RadiusSampleDist}
\end{figure}
For further scalability studies, we use the above described generator. In Figure \ref{fig:RadiusSampleDist}, left, we observe that the  DAG baseline without additional samples fails to produce a feasible solution on the example instance. With a higher sampling rate and lower sample distance, the failure rate decreases but at the same time the running time increases significantly, as the DAG gets much larger. COMMA produces a feasible mapping path and outperforms the DAG algorithm by four orders of magnitude. A similar behavior was observed across all generated instances. The main reason for this huge speed-up is illustrated in Figure \ref{fig:RadiusSampleDist}, right. The number of intervals that COMMA has to consider is significantly smaller than the number of locations that the DAG algorithm needs to process except for very small radii. As the DAG algorithm additionally scales worse with the number of locations to consider than COMMA, the running time gap gets more pronounced the more measurements need to processed and the larger the disks are. 

Executing the DAG baseline on the instance shown  in Figure \ref{fig:commaExampleBig} with  $n=10000$, $k=1000$ does not yield a feasible result even with a sampling distance as small as $0.004$, although the  DAG already contains $3.5$ million candidate nodes and $796$ million edges (that respect the time bound for consecutive locations). COMMA, however, scales very well and can easily identify feasible mapping paths for such instances. Indeed, for the generated instances depicted in Figure \ref{fig:IntervalComp} with up to $n=100000$ segments and up to $k=10000$ measurements, COMMA produces the solution in less than a second.  Thus, COMMA is also suitable to be used as an efficient subroutine in map matching algorithms for general graphs that select a set of candidate paths and then check whether they constitute feasible matching paths for GPS sequences with potentially thousands of measurements.

To further illustrate the scalability of COMMA, Table \ref{tab:Scalability} shows results for very large $n$ and $k$ and also provides a breakdown of the running time. Even for millions of path segments and measurements, the running time stays within a few minutes. We see that the interval computation dominates the overall running time even when using the AABB tree, while the sweeps and the path extraction are very efficient. 
\begin{table}
    \centering
    \footnotesize
    \begin{tabular}{cc||r|r|r|r|r}
        $n$ & $k$ & Intervals & Time [s] & Intervals [s] & Sweeps [s] & Extraction [s]\\
        \hline
        $10^8$ & $10^4$ & 147,888  & 182.67 & 182.52 & 0.13 & 0.02\\
        $10^8$ & $10^5$ & 1,415,391 & 256.40 & 246.07 & 9.55 & 0.17 \\
        $10^8$ & $10^6$ & 13,038,123  & 271.78 & 258.43 & 11.07 & 1.25\\
        $10^8$ & $10^7$ & 138,714,559  & 976.61 & 886.39 & 73.54 & 9.85\\
    \end{tabular}
    \vspace{0.5em}
    \caption{Scalability study for COMMA. The radius for  the instances is one and each instance returned a valid sequence.}
    \label{tab:Scalability}
\end{table}

Figure \ref{fig:primitivesTestedNAndK} provides a detailed analysis of the number of segment bounding boxes checked by the AABB tree traversal and the actual number of intervals that were identified for varying $n$ and $k$.
\begin{figure}
    \centering
    \includegraphics[width=0.48\textwidth]{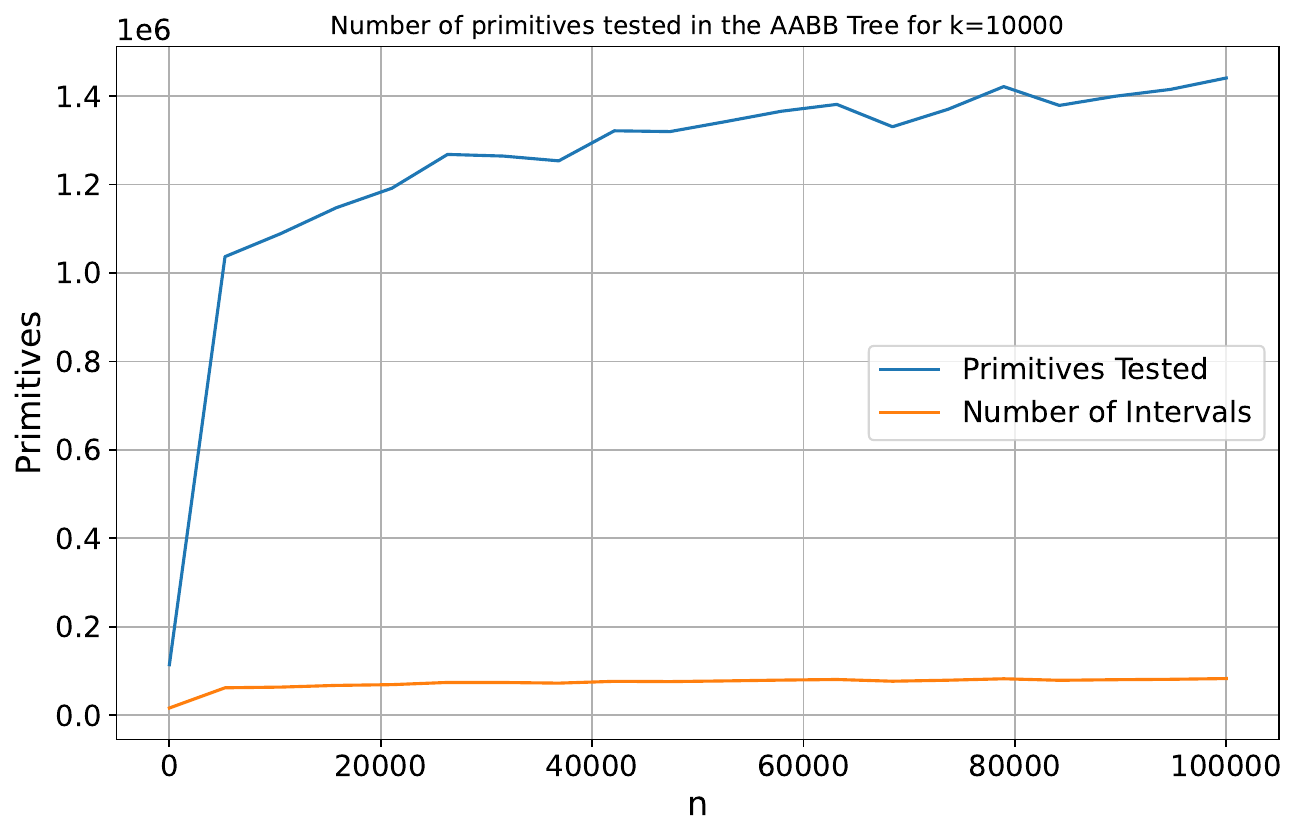}\hfill
    \includegraphics[width=0.48\textwidth]{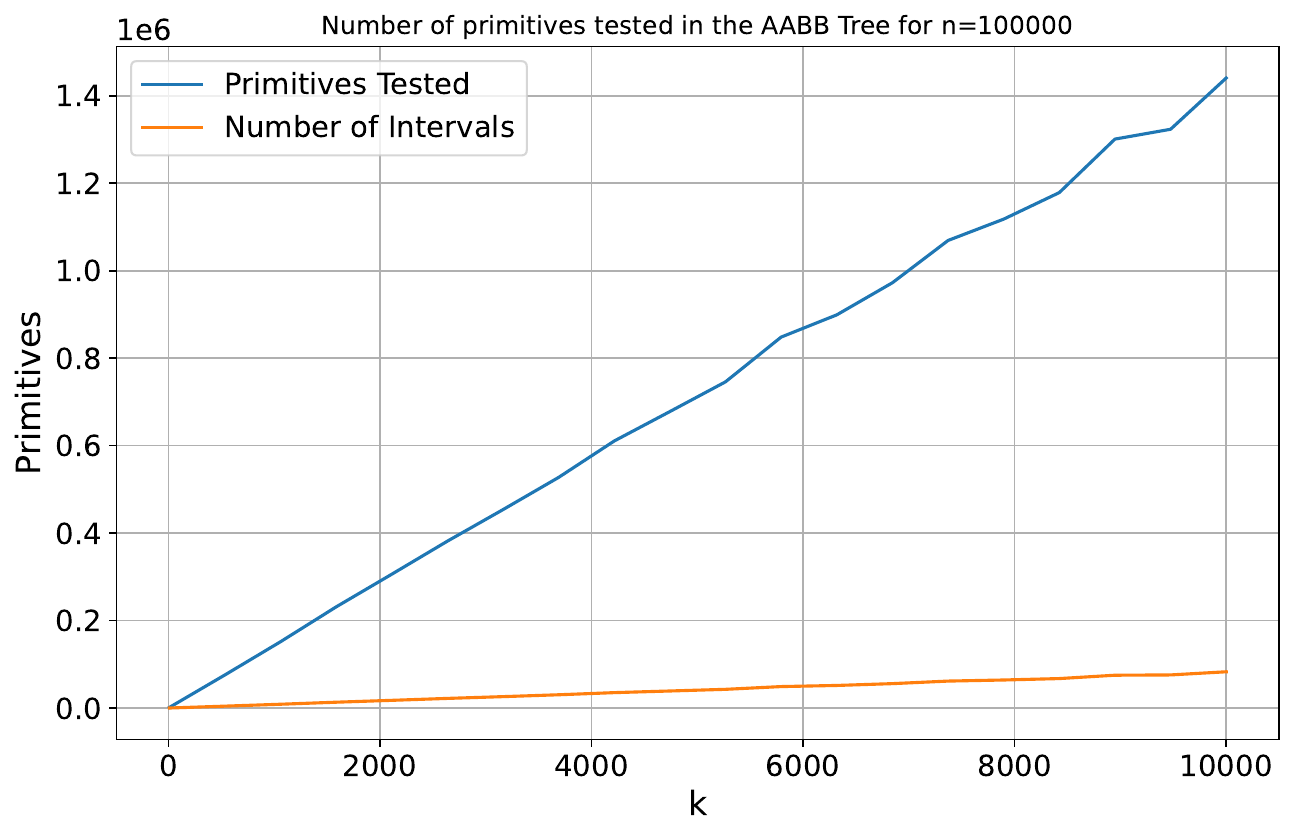}
    \caption{Number of intermediate primitives (bounding boxes) that were checked for intersection in the AABB tree during interval computation for $r=1$ and $k$ and $n$ fixed respectively.}
    \label{fig:primitivesTestedNAndK}
\end{figure}
Clearly, there is quite a significant gap between those two values. Thus, it might indeed be worthwhile to implement and engineer a  segment-circle intersection data structure with better theoretical guarantees in future work to cater for huge instances.
However, for typical instances derived from GTFS data or from GPS measurements with up to a few thousand segments and measurements, leveraging the AABB tree is sufficient to achieve very low query times.

\section{Conclusions and Future Work}
In this paper, we analyzed the weaknesses of existing map matching approaches which use discrete candidate location sets for the measurements. Our continuous approach is not only significantly faster, but more importantly, it is guaranteed to find a feasible match when there is one. The obvious next step is to extend COMMA to general graphs. This requires to deal with more complicated structures within the disks and to efficiently compute shortest paths  between disk boundary intersection points. Another direction for future work would be to leverage the fact that the intervals computed by COMMA encode all feasible mapping sequences. This allows to incorporate secondary objective functions to select a solution, for example,  distance of snapped location to the original measurements.

 
\bibliography{references}

\end{document}